\newcommand{\beq}{\vspace{0mm}\begin{equation}}
\newcommand{\eeq}{\vspace{0mm}\end{equation}}
\newcommand{\beqs}{\vspace{0mm}\begin{eqnarray}}
\newcommand{\eeqs}{\vspace{0mm}\end{eqnarray}}
\newcommand{\barr}{\begin{array}}
\newcommand{\earr}{\end{array}}
\newcommand{\Xmat}[0]{{{\bf X}}}
\newcommand{\bv}[0]{{\boldsymbol{b}}}
\newcommand{\xv}{\boldsymbol{x}}
\newcommand{\zv}{\boldsymbol{z}}
\newcommand{\given}{\,|\,}
\newcommand{\betav}[0]{{\boldsymbol{\beta}}}
\newcommand{\E}{\mathbb{E}}
\newtheorem{thm}{Theorem} 
\newtheorem{cor}[thm]{Corollary}
\newtheorem{lem}[thm]{Lemma}
\newcolumntype{L}[1]{>{\raggedright\let\newline\\\arraybackslash\hspace{0pt}}m{#1}}
\newcolumntype{C}[1]{>{\centering\let\newline\\\arraybackslash\hspace{0pt}}m{#1}}
\newcolumntype{R}[1]{>{\raggedleft\let\newline\\\arraybackslash\hspace{0pt}}m{#1}}
\newcolumntype{P}[1]{>{\centering\arraybackslash}p{#1}}
\begin{document}

\title{Permuted and Augmented Stick-Breaking\\ 
Bayesian Multinomial Regression}

\author{\name Quan Zhang \email quan.zhang@mccombs.utexas.edu \\ \name Mingyuan Zhou 
 \email mingyuan.zhou@mccombs.utexas.edu\\
              \addr 
              Department of Information, Risk, and Operations Management\\
              McCombs School of Business\\
        The University of Texas at Austin\\
       Austin, TX 78712, USA}

\editor{David M. Blei}

\maketitle

\begin{abstract}
To model categorical response variables given their covariates, we propose a permuted and augmented stick-breaking (paSB) construction that one-to-one maps the observed categories to randomly permuted latent sticks. This new construction transforms multinomial regression into regression analysis of stick-specific binary random variables that are mutually independent given their covariate-dependent stick success probabilities, which are parameterized by the regression coefficients of their corresponding categories. The paSB construction allows transforming an arbitrary cross-entropy-loss binary classifier into a Bayesian multinomial one. Specifically, we parameterize the negative logarithms of the stick failure probabilities with a family of covariate-dependent softplus functions to construct nonparametric Bayesian multinomial softplus regression, and transform Bayesian support vector machine (SVM) into Bayesian multinomial SVM. These Bayesian multinomial regression models are not only capable of providing probability estimates, quantifying uncertainty, increasing robustness, and producing nonlinear classification decision boundaries, but also amenable to posterior simulation. Example results demonstrate their attractive properties and performance. 

\end{abstract}

\begin{keywords}
Discrete choice models, logistic regression, nonlinear classification, softplus regression, support vector machines 
 \end{keywords}

\section{Introduction}
\label{intro}
Inferring the functional relationship between a categorical response variable and its covariates is a fundamental problem in physical and social sciences. To address this problem, it is common to use either multinomial logistic regression (MLR) 
\citep{mcfadden1973conditional,greene2003econometric,train2009discrete} 
or multinomial probit regression \citep{albert1993bayesian,mcculloch1994exact,mcculloch2000bayesian,imai2005bayesian},  both of which can be expressed as a latent-utility-maximization model that lets an individual make the decision by comparing its random utilities across all categories at once.
In this paper, 
we address the problem 
via a new stick-breaking construction of the multinomial distribution, which defines a one-to-one random mapping between the category and stick indices.
Rather than assuming an individual compares its random utilities across all categories at once, 
we assume an individual makes a sequence of stick-specific binary random decisions. 
The choice of the individual is 
the category mapped to the stick that is the first to choose ``1,'' or 
the category mapped to stick $S$ if all the first $S-1$ sticks choose ``0.''
 This framework transforms the problem of regression analysis of categorical variables into the problem of inferring the one-to-one mapping between the category and stick indices, and performing regression analysis of binary stick-specific random variables. 
 
Both 
MLR and the proposed 
stick-breaking models 
link a categorical response variable 
to its covariate-dependent probability 
parameters. 
While MLR is invariant to the permutation of category labels, given a fixed category-stick mapping, the proposed stick-breaking 
models purposely destruct that invariance. We are motivated to introduce this new framework for discrete choice modeling mainly to facilitate efficient Bayesian inference via data augmentation, introduce nonlinear decision boundaries, and relax a well-recogonized restrictive model assumption of MLR, as described below.

An important motivation is to extend efficient Bayesian inference available to binary 
regression to 
multinomial one. 
In the proposed stick-breaking models, 
the binary stick-specific random variables of an individual 
are 
conditionally independent given their stick-specific covariate-dependent probabilities.
Under this setting, one can solve a multinomial regression by solving conditionally independent binary ones. The only requirement is that the underlying  binary regression model 
uses the cross entropy loss. In other words, we require each stick-specific binary random variable to be linked via the Bernoulli distribution to its corresponding stick-specific covariate-dependent probability parameter.

Another important motivation is to improve the model capacity of MLR, which is a linear classifier 
in the sense that if the total number of categories is $S$, then MLR uses the intersection of $S-1$ linear hyperplanes to separate one class from the others. 
By choosing nonlinear binary regression models, we are able to enhance the capacities of the proposed stick-breaking models.
We are also motivated 
to relax the \emph{independence of irrelevant alternative} (IIA) assumption, an inherent property of MLR that requires the probability ratio of any two choices to be independent of the presence or characteristics of any other alternatives \citep{mcfadden1973conditional,greene2003econometric,train2009discrete}. By contrast,  
the proposed stick-breaking models make the probability ratio of two choices depend on 
other alternatives, as long as the two sticks that both choices are mapped to are not next to each other. 

In light of these considerations, we will first extend the softplus regressions recently proposed in \citet{SoftplusReg_2016}, a family of cross-entropy-loss binary classifiers that can introduce nonlinear 
decision boundaries and can recover 
logistic regression as a special case, 
to construct Bayesian multinomial softplus regressions (MSRs). 
We then consider a multinomial generalization of the widely used support vector machine (SVM) \citep{boser1992training,
cortes1995support,scholkopf1999advances,SVM},
a max-margin binary classifier that uses the hinge loss.
While there has been significant effort in extending binary SVMs into multinomial ones \citep{crammer2002algorithmic, lee2004multicategory,liu2011reinforced}, 
 the resulted extensions typically only provide the predictions of deterministic class labels. 
By contrast, 
we extend the Bayesian binary SVMs in \citet{sollich2002bayesian} and \citet{mallick2005bayesian} under the proposed framework to construct Bayesian multinomial SVMs (MSVMs), which naturally provide predictive class probabilities.

We will show that the proposed Bayesian MSRs 
 and MSVMs, which all generalize the stick-breaking construction to perform Bayesian multinomial regression, 
are not only capable of placing nonlinear decision boundaries between different categories, 
but also amenable to posterior simulation via data augmentation. Another attractive feature shared by all these proposed Bayesian algorithms is that they can 
not only predict class probabilities but also quantify model uncertainty. In addition, we  will show that robit regression, a robust cross-entropy-loss  binary classifier proposed in   \citet{liu2004robit}, can be extended into a robust Bayesian multinomial classifier under the proposed stick-breaking construction.

The remainder of the paper 
is organized as follows. In Section \ref{sec:stick_Mult} we briefly review MLR and discuss the restrictions of its stick-breaking construction. 
In Section \ref{sec:paSB} we propose the permuted and augmented stick breaking (paSB) 
to construct Bayesian multi-class classifiers, 
present  the inference, 
and show 
how the IIA assumption is relaxed. 
Under the paSB framework, we show how to transform softplus regressions 
 and support vector machines into Bayesian multinomial regression models in Sections \ref{sec:MSR} and \ref{sec:msvm}, 
 respectively.
We provide experimental results in Section \ref{eg} and conclude the paper in Section \ref{sec:conclude}. 

\section{Multinomial Logistic Regression and Stick Breaking}\label{sec:stick_Mult}

In this section we first briefly review multinomial logistic regression (MLR). We then use the stick-breaking construction to show how to generate 
a categorical random variable as a sequence of dependent binary variables, and further discuss a naive approach to transform binary logistic regression under stick breaking into multinomial regression. 
In the following discussion, we use $i\in\{1,\ldots,N\}$ to index the individual/observation, $s\in\{1,\ldots,S\}$ to index the choice/category, and the prime symbol to denote the transpose operation. 

\subsection{Multinomial Logistic Regression}

MLR that parameterizes the probability of each category given the covariates as
\beq\label{eq:MLR}
\textstyle P(y_i = s\given \xv_i,\{\betav_s\}_{1,S}) = p_{is}, ~p_{is} = {e^{\xv_i'\betav_s}}\big/{\big(\sum_{j=1}^S e^{\xv_i'\betav_{j}}\big)}
\eeq
is widely used, 
where 
$\xv_i\in\mathbb{R}^{P+1}$ consists of $x_{i1}=1$ and $P$ covariates, and $\betav_s\in\mathbb{R}^{P+1}$ consists of the regression coefficients for the $s$th category 
\citep{mccullagh1989generalized,albert1993bayesian,holmes2006bayesian}. 
Without loss of generality, one may choose category $S$ as the reference category by setting 
all the elements of $\betav_S$ as 0, making $e^{\xv_i'\betav_S}=1$ almost surely (a.s.). 
 For MLR, if data $i$ is assigned to the category with the largest 
 $p_{is}$, 
then one may consider that 
category $s$ resides 
 within a convex polytope \citep{polytope}, defined by
the set of solutions to $S-1$ inequalities as
$
\xv'(\betav_j-\betav_s)\le0$, where $j\in\{1,\ldots,s-1,s+1,\ldots,S\}$. 

Despite its popularity, MLR is a linear classifier in the sense that it uses the intersection of $S-1$ linear hyperplanes to separate one class from the others. As a classical discrete choice model in econometrics, it 
 makes the independence of irrelevant alternatives (IIA) assumption, implying that the unobserved factors for choice making 
are both uncorrelated and having the same variance across all alternatives \citep{mcfadden1973conditional,train2009discrete}. 
Moreover, while its log-likelihood is convex and there are efficient iterative algorithms to find 
the maximum likelihood or 
maximum a posteriori solutions of $\betav_s$, the absence of conjugate priors on $\betav_s$ makes it difficult to derive efficient Bayesian inference.
For Bayesian inference, \cite{polson2013bayesian} have introduced the P\'olya-Gamma data augmentation for logit models, and combined it with the data augmentation technique of \cite{holmes2006bayesian} for the multinomial likelihood to develop a Gibbs sampling algorithm for MLR. This algorithm, however, has to update $\betav_s$ one at a time while conditioning on all $\betav_{j}$ for $j\neq s$. Thus it may not only lead to slow convergence and mixing, especially when the number of categories $S$ is large, but also prevent us from parallelizing the sampling of $\{\betav_s\}_{1,S}$ within each 
MCMC iteration. 

\subsection{Stick Breaking}\label{sec:sbdef}
Suppose $y_i$ is
 a random variable drawn from a categorical distribution with a finite vector of probability parameters $(p_{i1},\ldots,p_{iS})$, where $S<\infty$, $p_{is}\ge0$, and $\sum_{s=1}^Sp_{is} =1$. Instead of directly using $y_i\sim\sum_{s=1}^S p_{is}\delta_s$, one may consider generating $y_i$ using the multinomial stick-breaking construction 
 that sequentially draws binary random variables 
 \beq\label{eq:StickBreaking}
 \textstyle{b_{is} \,\big{|}\, \{b_{ij}\}_{j<s}\,\sim\,\mbox{Bernoulli} \left[\left(1-\sum_{j<s}b_{ij}\right) \pi_{is}\right]}, ~~\displaystyle\pi_{is} = 
 \frac{p_{is}}{1-\sum_{j<s}p_{ij}}
 \eeq
 for $s=1,2,\ldots,S$. Note that $\pi_{iS}=1$ and $b_{iS} = 1- \sum_{j=1}^{S-1}b_{ij}$ by construction. Defining $y_i=s$ if and only if $b_{is}=1$ and $b_{ij}=0$ for all $j\neq s$, then one has a strick-breaking representation for the multinomial probability parameter as
 \beqs\label{eq:StickBreaking1}
 &P(y_i = s\given \{\pi_{is}\}_{1,S}) = P(b_{is}=1)\prod_{j\neq s}P(b_{ij}=0) = \pi_{is}\prod_{j<s}(1-\pi_{ij}), 
 \eeqs
 which, as expected, recovers $p_{is}$ by substituting the definitions of $\pi_{is}$ shown in \eqref{eq:StickBreaking}. 

The finite stick-breaking construction in \eqref{eq:StickBreaking1} can be further generalized to an infinite setting, as widely used in Bayesian nonparametrics \citep{hjort2010bayesian}. 
For example, the stick-break construction of \citet{sethuraman1994constructive} represents the length of the $k$th stick
using the product of $k$ stick-specific probabilities that are independent, and identically distributed (i.i.d.) beta random variables. 
It represents a size-biased random permutation of a Dirichlet process (DP) \citep{ferguson73} random draw, which includes countably infinite atoms whose weights sum to one. The stick-breaking construction of \citet{sethuraman1994constructive} has also been generalized to represent a draw from a random probability measure that is more general than the DP \citep{Pitman96somedevelopments,ishwaran2001gibbs,VBHDP}.

Related to this paper, one may further consider making the stick-specfic probabilities depend on the covariates \citep{KSBP,PSBP_Dunson,Lu_LSBP}. For example, the logistic stick-breaking process of \citet{Lu_LSBP} uses the product of $k$ covariate-dependent logistic functions to parameterize the probability of the $k$th stick.
To implement a stick-breaking process mixture model, truncated stick-breaking representations with a finite 
number of sticks are commonly used, with inference developed via both Gibbs sampling \citep{ishwaran2001gibbs,KSBP,PSBP} and variational approximation \citep{blei2006variational,kurihara2007collapsed,Lu_LSBP}. 

Another related work is the order-based dependent Dirichlet processes of \citet{DDP}, which use an ordered stick-breaking construction for mixture modeling, encouraging the data samples close to each other in the covariate space to share similar orders of the sticks and hence similar mixture weights. We will show that the proposed stick-breaking construction is distinct in that all data samples share the same category-stick mapping inferred from the data, with the category labels mapped to lower-indexed sticks subject to fewer geometric constraints on their decision boundaries.

\subsection{Logistic Stick Breaking}\label{LSB}
 
The stick-breaking construction parameterizes each $p_{is}$ with the product of $s$ probability parameters and links each $y_i$ with a unit-norm binary vector $(b_{i1},\ldots,b_{iS})$, where $b_{iy_i}=1$ and $b_{ij}=0$ a.s. if $j\neq y_i$. Following the logistic stick-breaking construction of \cite{Lu_LSBP}, one may represent $p_{is}$ with \eqref{eq:StickBreaking1} and parameterize the logit of each $\pi_{is}$ with a latent Gaussian variable $w_{is}$ as $\pi_{is} = {e^{w_{is}}}/{\sum_{j=1}^S e^{w_{ij}} }$.
To model observed or latent multinomial variables, a stick-breaking procedure, closely related to that of \citet{Lu_LSBP},
is used in \citet{khan2012stick} to transform the modeling of multinomial probability parameters into the modeling of the logits of binomial probability parameters using Gaussian latent variables. 
As shown in \citet{linderman2015dependent}, this procedure allows using the P\'olya-Gamma data augmentation, without requiring the assistance of the technique of \citet{holmes2006bayesian}, to construct Gibbs sampling that simultaneously updates all categories in each MCMC iteration, leading to improved performance over the one proposed in \citet{polson2013bayesian}. 

The simplification brought by the stick-breaking representation, which 
stochastically arranges its categories in decreasing order, comes with a clear change in that it removes the invariance of the multinomial distribution to label permutation. 
While the loss of invariance to label permutation may not pose a major issue for Bayesian mixture models inferred with MCMC \citep{jasra2005markov,kurihara2007collapsed}, it appears to be a major obstacle when applying stick breaking for multinomial regression, where the performance is often found to be sensitive to how the labels of the $S$ categories are ordered.
In particular, if one 
constructs a 
logistic stick breaking model 
by letting $\mbox{logit}(\pi_{is}) = w_{is} = \xv_i'\betav_s$, which means $\pi_{is} = (1+e^{-\xv_i'\betav_{s}})^{-1}$, then one has
\beq\textstyle
p_{is} = \big(1+e^{-\xv_i'\betav_s}\big)^{-1} \prod_{j<s} \big({1+e^{\xv_i'\betav_j}}\big)^{-1}, \notag
\eeq
which clearly tends to impose fewer geometric constraints on the classification decision boundaries of a category with a smaller $s$. For example, $p_{i1} = \big(1+e^{-\xv_i'\betav_1}\big)^{-1}$ is larger than $50\%$ if $\xv_i'\betav_1>0$ while $p_{i2} = \big(1+e^{\xv_i'\betav_1}\big)^{-1}\big(1+e^{-\xv_i'\betav_2}\big)^{-1}$ is possible to be larger than $50\%$ only if both $\xv_i'\betav_1<0$ and $\xv_i'\betav_2>0$. We will use an example to illustrate this type of geometric constraints in 
Section \ref{nonlinear_msr}. 

Under the logistic stick-breaking construction, not only could the performance be sensitive to how the $S$ different categories are ordered, but the imposed geometric constraints could also 
be overly restrictive even if the categories are appropriately ordered. Below we 
address the first issue by introducing a permuted and augmented stick-breaking representation for a multinomial model, and the second issue by adding the 
ability to model nonlinearity.

\section{Permuted and Augmented Stick Breaking} 
\label{sec:paSB}

To turn the seemingly undesirable sensitivity of the stick-breaking construction to label permutation into a favorable model property, when label asymmetry is desired, and mitigate performance degradation, when label symmetry is desired, we introduce a permuted and augmented stick-breaking (paSB) construction for a multinomial distribution, making it straightforward to extend an arbitrary binary classifier with cross entropy loss into a Bayesian multinomial one. The paSB construction infers a one-to-one mapping between the labels of the $S$ categories and the indices of the $S$ latent sticks, transforming the problem from modeling a multinomial random variable into modeling $S$ conditionally independent binary ones. 
It not only allows for parallel computation within each MCMC iteration, but also improves the mixing of MCMC in comparison to the one used in \citet{polson2013bayesian}, which updates one regression-coefficient vector conditioning on all the others, as will be shown in Section \ref{comparison}. 
 Note that the number of distinct one-to-one label-stick mappings is $S!$, which quickly becomes too large to exhaustively search for the best mapping as $S$ increases. Our experiments will show that the proposed MCMC algorithm can quickly escape from a purposely poorly initialized mapping and subsequently switch between many different mappings that all lead to similar performance, suggesting an 
effective search space that is considerably smaller than~$S!$.

\subsection{Category-Stick Mapping and Data Augmentation}
 \label{sec:MH}
The proposed 
paSB construction randomly maps a category to one and only one of the $S$ latent sticks and makes the augmented Bernoulli random variables $\{b_{is}\}_{1,S}$ conditionally independent to each other given $\{\pi_{is}\}_{1,S}$. Denote $\zv=(z_1,\ldots,z_S)$ as a permutation of $(1,\ldots,S)$, where $z_s\in\{1,\ldots,S\}$ is the index of the stick that category $s$ is mapped to. Given the label-stick mapping $\zv$, let us 
denote $p_{is}(\zv)$ as the multinomial probability of category~$s$, and $\pi_{iz_s}(\xv_i,\betav_s)$ as the covariate-dependent stick probability 
that is associated with the covariates of observation $i$ and the stick that category $s$ is mapped to.
For notational convenience, we will write $\pi_{iz_s}(\xv_i,\betav_s)$ as $\pi_{iz_s}$ and $\pi_{ij}(\xv_i,\betav_{s:z_s=j})$ as $\pi_{ij}$. We emphasize that 
here the $s$th regression-coefficient vector $\betav_s$ is always associated with both category $s$ and the corresponding stick probabily $\pi_{iz_s}$, 
a construction that will facilitate the inference of the label-stick mapping $\zv$. 
The following 
{Theorem} shows how to generate a categorical random variable of $S$ categories with a set of $S$ conditionally independent Bernoulli random variables. This is 
key to transforming the problem from solving multinomial regression into solving $S$ binary regressions independently. 

%

\begin{thm}\label{thm1}
Suppose $y_i\sim\sum_{s=1}^S p_{is}(\zv)\delta_{s}$, where $[p_{i1}(\zv),\ldots,p_{iS}(\zv)]$ is a multinomial probability vector whose elements are constructed as 
\begin{align}
p_{is}(\zv) = (\pi_{iz_s})^{\mathbf{1}(z_s\neq S)}\prod_{j<z_s}(1-\pi_{ij}),\textstyle 
\label{eq:p_is}
\end{align} 
 then $y_i$ can be equivalently generated under
the permuted and augmented stick-breaking (paSB) construction as
 \begin{align} 
 y_i \sim\sum_{s=1}^S &\left\{ \textstyle \left[\mathbf{1}(b_{iz_s}=1)\right]^{\mathbf{1}(z_s\neq S)}\prod_{j<z_s}\mathbf{1}(b_{ij}=0)\right\} \delta_s\,, \label{eq:AugSB}\\
 b_{ij} &\sim\emph{\mbox{Bernoulli}}(\pi_{ij}),~~j\in\{1,\ldots,S\}.\label{eq:paSB}
\end{align}
\end{thm}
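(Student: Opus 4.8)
The plan is to compute the law of $y_i$ under the paSB construction \eqref{eq:AugSB}--\eqref{eq:paSB} directly from the mutual conditional independence of $\{b_{ij}\}_{1,S}$ given $\{\pi_{ij}\}_{1,S}$, and to check that it coincides with the categorical distribution $\sum_{s=1}^S p_{is}(\zv)\delta_{s}$ with $p_{is}(\zv)$ as in \eqref{eq:p_is}. For each category $s$ let $A_s$ denote the event appearing as the coefficient of $\delta_s$ in \eqref{eq:AugSB}: namely $A_s=\{b_{iz_s}=1\}\cap\bigcap_{j<z_s}\{b_{ij}=0\}$ when $z_s\neq S$, and $A_s=\bigcap_{j<S}\{b_{ij}=0\}$ when $z_s=S$. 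It then suffices to establish two facts: (i) the events $A_1,\ldots,A_S$ partition the range of $(b_{i1},\ldots,b_{iS})$, so that in \eqref{eq:AugSB} exactly one coefficient equals $1$ almost surely and $y_i$ is a well-defined categorical draw; and (ii) $P(A_s)=p_{is}(\zv)$ for every $s$, whence $P(y_i=s)=\E\left[\mathbf{1}(A_s)\right]=p_{is}(\zv)$.

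For (i), I would use that $\zv=(z_1,\ldots,z_S)$ is a permutation of $(1,\ldots,S)$, so $s\mapsto z_s$ is a bijection and there is a unique $s^\star$ with $z_{s^\star}=S$. Fix a realization of $(b_{i1},\ldots,b_{iS})$. If $b_{ik}=1$ for some $k\in\{1,\ldots,S-1\}$, let $k^\star$ be the smallest such index; then the realization lies in $A_s$ for the unique $s$ with $z_s=k^\star$, and in no other $A_{s'}$, since a competing event with $z_{s'}<k^\star$ would force $b_{iz_{s'}}=1$ in contradiction with minimality of $k^\star$, while one with $z_{s'}>k^\star$ (including $z_{s'}=S$) would force $b_{ik^\star}=0$. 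If instead $b_{ij}=0$ for all $j\in\{1,\ldots,S-1\}$, the realization lies in $A_{s^\star}$ and in no other. Since membership in each $A_s$ does not depend on $b_{iS}$, these cases are exhaustive, so $\{A_s\}$ is a partition.

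For (ii), I would evaluate $P(A_s)$ using the independence in \eqref{eq:paSB}. When $z_s\neq S$,
\[
P(A_s)=P(b_{iz_s}=1)\prod_{j<z_s}P(b_{ij}=0)=\pi_{iz_s}\prod_{j<z_s}(1-\pi_{ij}),
\]
and when $z_s=S$,
\[
P(A_s)=\prod_{j<S}P(b_{ij}=0)=\prod_{j<S}(1-\pi_{ij}).
\]
Both cases are captured by $(\pi_{iz_s})^{\mathbf{1}(z_s\neq S)}\prod_{j<z_s}(1-\pi_{ij})$, since the exponent $\mathbf{1}(z_s\neq S)$ turns the leading factor into $\pi_{iz_s}$ in the first case and into $1$ in the second; this is precisely $p_{is}(\zv)$ of \eqref{eq:p_is}. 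Combining (i) and (ii) yields $y_i\sim\sum_{s=1}^S p_{is}(\zv)\delta_{s}$, and as a byproduct $\sum_{s=1}^S p_{is}(\zv)=1$.

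The only delicate point is the bookkeeping around the catch-all stick $S$ together with the use of the permutation property to guarantee that the events $A_s$ are exhaustive; I expect this to be the sole, and minor, obstacle, the remainder being a routine product computation under independence.
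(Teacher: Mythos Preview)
Your proposal is correct and follows essentially the same approach as the paper's own proof: both compute $P(y_i=s)$ under \eqref{eq:AugSB}--\eqref{eq:paSB} by exploiting independence of the $b_{ij}$ and the fact that the event defining $y_i=s$ depends only on $\{b_{ij}\}_{j\le z_s}$, with the paper phrasing this as a marginalization over $\{b_{ij}\}_{j>z_s}$ and you phrasing it as a direct product of Bernoulli probabilities. Your explicit verification in step (i) that the events $A_s$ partition the sample space is a welcome addition for rigor that the paper leaves implicit.
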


Distinct from the conventional stick breaking in \eqref{eq:StickBreaking} that maps category $s$ to stick $s$ and makes $b_{is}$ depend on $b_{ij}, j=1,\ldots, s-1$, 
 under the new construction in 
\eqref{eq:AugSB}-\eqref{eq:paSB}, 
the $S$ categories are now randomly permuted and then one-to-one mapped to $S$ sticks, 
and the augmented binary random variables $\{b_{ij}\}_j$ become mutually independent given $\{\pi_{ij}\}_j$.
Given $y_i$, we still have $b_{ij}=0$ for $j<z_{y_i}$ and $b_{iz_{y_i}}=1$ a.s., but impose no restriction on any $b_{ij}$ for $j>z_{y_i}$, whose conditional posteriors given $y_i$ and $\pi_{ij}$ remain the same as their priors.
These changes are key to appropriately ordering the latent sticks, more flexibly parameterizing $\pi_{iz_s}$ and hence $p_{is}(\zv)$, and maintaining tractable inference. 


With paSB, the problem of inferring the functional relationship between the categorical response $y_i$ and the corresponding covariates $\xv_i$ is now transformed into the problem of modeling $S$ conditionally independent binary regressions as
$$b_{iz_s} \given \xv_i, \betav_s \sim\mbox{Bernoulli}[\pi_{iz_s}(\xv_i,\betav_s)],~i=1,\ldots,N,~s=1,\ldots,S.$$
Note that the only requirement for the binary regression model 
under
 paSB is that it uses the Bernoulli likelihood. In other words, it uses the cross entropy loss \citep{murphy2012machine} as
$$
-\sum_{i=1}^N\ln P(b_{iz_s} \given \xv_i, \betav_s) = \sum_{i=1}^N \big\{-b_{iz_s} \ln \pi_{iz_s}(\xv_i,\betav_s) - (1-b_{iz_s}) \ln [1-\pi_{iz_s}(\xv_i,\betav_s)]\big\}.
$$
A basic choice is paSB logistic regression that lets $$\pi_{iz_s}(\xv_i,\betav_s) = 1/(1+e^{-\xv_i\betav_s}),$$ which becomes the same as the logistic stick breaking construction described in Section \ref{LSB} if $z_s = s$ for all $s\in\{1,\ldots,S\}$. 
Another choice is paSB-robit regression that extends robit regression of \citet{liu2004robit}, a robust binary classifier using cross entropy loss, into a robust Bayesian multinomial classifier. 
In robit regression, observation $i$ 
is labeled as 1 if $\xv_i'\betav+\varepsilon_i>0$ and as 0 otherwise, where $\varepsilon_i$ are independently drawn from a $t$-distribution with $\kappa$ degrees of freedom, denoted as $\varepsilon_i\overset{iid}{\sim}t_{\kappa}$. Consequently, the conditional class probability function of robit regression is $P(y_i=1\given \xv_i,\betav)=F_\kappa(\xv_i'\betav)$, where $F_\kappa$ is the cumulative density function of $t_{\kappa}$. The robustness is attributed to the heavy-tail property of $F_\kappa(\xv_i'\betav)$, which, if $\kappa<7$,  imposes less penalty than the conditional class probability function of logistic regression does  on misclassified observations that are 
far from the decision boundary. 
Applying Theorem \ref{thm1}, the category probability of paSB-robit regression with $\kappa$ degrees of freedom  is shown in \eqref{eq:p_is}, where $\pi_{iz_s}(\xv_i,\betav_s) =  F_\kappa(\xv_i'\betav_s)$.
The paSB-robit regression provides a simple solution to robust multiclass classification; with $\{b_{ij}\}_{i,j}$ defined in Theorem \ref{thm1}, we run independent binary robit regressions using the Gibbs sampler  proposed in \citet{liu2004robit}.

In addition to paSB, 
we define permuted and augmented reverse stick breaking (parSB) 
{ in the following Corollary}.
\begin{cor}\label{cor_parsb}
Suppose $y_i\sim\sum_{s=1}^S p_{is}\delta_{s}$ and 
 \begin{align}
 p_{is}(\zv)= \textstyle (1-\pi_{iz_s})^{\mathbf{1}(z_s\neq S)}\prod_{j<z_s}\pi_{ij}\,, \notag
\end{align} 
then $y_i$ can also be generated under the permuted and augmented reverse stick-breaking (parSB) representation as
 \begin{align}\vspace{-3mm}\label{eq:AugSB1_y}
 y_i \sim\sum_{s=1}^S& \left\{ \textstyle \left[\mathbf{1}(b_{iz_s}=0)\right]^{\mathbf{1}(z_s\neq S)}\prod_{j<z_s}\mathbf{1}(b_{ij}=1)\right\} \delta_s\,,\\
 \notag
 b_{ij} &\sim\emph{\mbox{Bernoulli}}(\pi_{ij}),~~j\in\{1,\ldots,S\}.
\end{align}
\end{cor}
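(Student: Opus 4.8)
The plan is to derive Corollary \ref{cor_parsb} directly from Theorem \ref{thm1} by passing to complementary Bernoulli variables, rather than repeating the argument from scratch. First I would introduce $\tilde b_{ij} := 1 - b_{ij}$ and $\tilde\pi_{ij} := 1 - \pi_{ij}$ for $j\in\{1,\ldots,S\}$. Since the $b_{ij}$ are mutually independent with $b_{ij}\sim$ Bernoulli$(\pi_{ij})$, the complements are mutually independent with $\tilde b_{ij}\sim$ Bernoulli$(\tilde\pi_{ij})$. The crucial observation is that the event selecting category $s$ in the parSB scheme \eqref{eq:AugSB1_y}, namely $\{b_{iz_s}=0\}$ (when $z_s\neq S$) together with $\{b_{ij}=1\}$ for every $j<z_s$, coincides exactly with the paSB event $\{\tilde b_{iz_s}=1\}$ (when $z_s\neq S$) together with $\{\tilde b_{ij}=0\}$ for every $j<z_s$. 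Hence the indicator weight placed on $\delta_s$ in \eqref{eq:AugSB1_y} equals the indicator weight placed on $\delta_s$ in \eqref{eq:AugSB} with $b_{ij}$ replaced by $\tilde b_{ij}$.

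Given this, I would invoke Theorem \ref{thm1} verbatim, applied to the independent Bernoulli family $\{\tilde b_{ij}\}_j$ with success probabilities $\{\tilde\pi_{ij}\}_j$ and the same permutation $\zv$. Theorem \ref{thm1} then tells us that the categorical variable produced by \eqref{eq:AugSB}--\eqref{eq:paSB} (with tildes) has probability vector whose $s$th entry is $(\tilde\pi_{iz_s})^{\mathbf{1}(z_s\neq S)}\prod_{j<z_s}(1-\tilde\pi_{ij})$. Substituting back $\tilde\pi_{ij} = 1-\pi_{ij}$ turns this into $(1-\pi_{iz_s})^{\mathbf{1}(z_s\neq S)}\prod_{j<z_s}\pi_{ij} = p_{is}(\zv)$, which establishes the corollary. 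As a sanity check one also sees that $[p_{i1}(\zv),\ldots,p_{iS}(\zv)]$ is a legitimate probability vector: reindexing by $k=z_s$ and separating the $k=S$ term gives $\sum_{k=1}^{S-1}(1-\pi_{ik})\prod_{j<k}\pi_{ij}+\prod_{j=1}^{S-1}\pi_{ij}$, where the first sum telescopes to $1-\prod_{j=1}^{S-1}\pi_{ij}$ and the total is $1$; this is inherited automatically from Theorem \ref{thm1} under the same complementation.

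Since the argument is purely a change of variables, there is no genuine obstacle; the one point deserving a sentence of care is the stick indexed by $S$. In both paSB and parSB the category mapped to stick $S$ (the $s$ with $z_s=S$) carries no factor involving $\pi_{iS}$, because the exponent $\mathbf{1}(z_s\neq S)$ suppresses it, so complementation acts symmetrically in this boundary case and the reduction to Theorem \ref{thm1} goes through without modification. A self-contained alternative, should one prefer not to appeal to the complementation trick, is to compute directly: by independence of the $b_{ij}$, $P\!\left(b_{iz_s}=0,\ b_{ij}=1\ \text{for all } j<z_s\right) = (1-\pi_{iz_s})^{\mathbf{1}(z_s\neq S)}\prod_{j<z_s}\pi_{ij}$ for each $s$, and these $S$ quantities sum to $1$ by the telescoping identity above, so they define exactly the target categorical law.
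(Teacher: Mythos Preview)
Your proposal is correct. The paper does not spell out a separate proof of Corollary~\ref{cor_parsb}; it presents it as an immediate consequence of Theorem~\ref{thm1}, and the surrounding discussion (noting that parSB is unnecessary whenever $\pi_{iz_s}(\xv_i,-\betav_s)=1-\pi_{iz_s}(\xv_i,\betav_s)$) makes clear that the intended derivation is precisely your complementation trick $\tilde b_{ij}=1-b_{ij}$, $\tilde\pi_{ij}=1-\pi_{ij}$, which swaps the roles of ``success'' and ``failure'' and reduces parSB to paSB verbatim. Your handling of the reference stick $S$ and the telescoping sanity check are both fine and slightly more explicit than anything the paper writes down.
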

Generally speaking, if $\pi_{iz_s}(\xv_i,-\betav_s) = 1-\pi_{iz_s}(\xv_i,\betav_s)$, which is the case for logistic stick breaking 
and robit stick breaking, where $\pi_{iz_s}$ are defined as  $(1+e^{-\xv_i'\betav_{s}})^{-1}$ and  $F_\kappa(\xv_i'\betav_s)$, respectively, 
and Bayesian multinomial SVMs to be discussed in Section \ref{sec:msvm}, then there is no need to introduce parSB as an addition to paSB. 
Otherwise, there are potential benefits, such as for softplus regressions to be introduced in Section~\ref{sec:MSR}, to 
combine parSB with paSB.

\subsection{Inference of Stick Variables and Category-Stick Mapping}\label{mh}
Below we first describe Gibbs sampling for the augmented stick variables $\{b_{ij}\}_{1,S}$, 
and then introduce a Metropolis-Hastings (MH) step to infer the category-stick mapping~$\zv$. 
Given the category label $y_i$, stick probability $\pi_{ij}$, and $\zv$, 
we sample $b_{ij}$ as
\beq
(b_{ij}\given y_i, \pi_{ij},\zv) \sim \mathbf{1}(j=z_{y_i}) + \mathbf{1}(j>z_{y_i}) \mbox{Bernoulli}(\pi_{ij}), \notag 
\eeq
for $j=1,\ldots,S-1$, and let 
\beq
b_{iS} = \mathbf{1}(z_{y_i}=S). \notag 
\eeq
This means we let $b_{ij}=0$ if ${j}<z_{y_i}$, $b_{ij}=1$ if $j=z_{y_i}$, draw $b_{ij}$ from $\mbox{Bernoulli}(\pi_{ij})$ if $z_{y_i}<j<S$, and let $b_{iS}=1$ if and only if $z_{y_i}=S$.
Note that  stick $S$ is used as a reference stick and 
$\pi_{iS}$ is not used in defining  $p_{is}(\zv)$  in 
\eqref{eq:p_is}.
Despite 
having no impact on computing $\{p_{is}\}_{1,S}$, we infer $\pi_{iS}$ ($i.e.$, sample the regression-coefficient vector $\betav_{s':z_{s'}=S}$) under the likelihood $\prod_{i=1}^N\mbox{Bernoulli}(b_{iS};\pi_{iS})$ 
and use it in a Metropolis-Hastings step, as described in (\ref{eq:MH}) shown 
below, 
to decide whether to switch the mappings of two different categories, if one of which is mapped to the reference stick $S$. 
Once we have an MCMC sample of $\{b_{ij}\}_{1,S}$, we then essentially solve independently $S$ binary classification problems, the $j$th of which can be expressed as
$b_{ij} \given \xv_i, \betav_{s:z_s=j} \sim\mbox{Bernoulli}[\pi_{ij}(\xv_i,\betav_{s:z_s=j})].$

Analogously, for parSB, $\{b_{ij}\}_{1,S}$ can be sampled as 
$
(b_{ij}\given y_i, \pi_{ij},\zv) \sim \mathbf{1}(j<z_{y_i}) + \mathbf{1}(j>z_{y_i}) \mbox{Bernoulli}(\pi_{ij}) 
$ 
for $j=1,\ldots,S-1$, and 
$ 
b_{iS} =1-\mathbf{1}(z_{y_i}=S), 
$ 
which means we let $b_{ij}=1$ if ${j}<z_{y_i}$, let $b_{ij}=0$ if $j=z_{y_i}$, draw $b_{ij}$ from $\mbox{Bernoulli}(\pi_{ij})$ if $z_{y_i}<j<S$, and let $b_{iS}=0$ if and only if $z_{y_i}=S$.

Since stick-breaking multinomial classification is not invariant to the permutation of its class labels, it may perform substantially worse than it could be if the inherent geometric constraints implied by the current ordering of the labels make it difficult to adapt the decision boundaries to the data.
Our solution to this problem is to infer the one-to-one mapping between the category labels and stick indices from the data.
 We construct a Metropolis-Hastings (MH) step within each Gibbs sampling iteration, with a proposal of switching two sticks that categories $c$ and $c^\prime$, $1\le c<c^\prime\le S$, are mapped to, by changing the current category-stick one-to-one mapping from $\zv=(z_1,\ldots,z_c,\ldots,z_{c^\prime},\ldots,z_S)$ to $\zv'=(z'_1,\ldots,z'_S):=(z_1,\ldots,z_{c^\prime},\ldots,z_c,\ldots,z_S)$.
Assuming a uniform prior on $\zv$ and proposing $(c,c^\prime)$ uniformly at random from one of the $\binom{S}{2}=
S(S-1)/2
$ possibilities,
we would accept the proposal with probability 
\beq\small \label{eq:MH}
\!\!\!
\min\left\{ \prod_{i}\frac{ \prod_{s=1}^S[p_{is}(\zv')]^{\mathbf{1}(y_i=s)}}{ \prod_{s=1}^S[p_{is}(\zv)]^{\mathbf{1}(y_i=s)}},~1\right\} = 
\min\left\{\prod_{i}\frac{ 
 \prod_{s=1}^S \left[(\pi_{iz'_s})^{\mathbf{1}(z'_s\neq S)}\prod_{j<z'_s}(1-\pi_{ij})\right]^{\mathbf{1}(y_i=s)}}{ 
\prod_{s=1}^S \left[(\pi_{iz_s})^{\mathbf{1}(z_s\neq S)}\prod_{j<z_s}(1-\pi_{ij})\right]^{\mathbf{1}(y_i=s)}}
,~1\right\}.
\eeq 
%
%

\subsection{Sequential Decision Making}

Random utility models, including both the logit and probit models as special examples, are widely used to infer the functional relationship between a categorical response variable and its covariates. For discrete choice analysis in econometrics \citep{hanemann1984discrete,greene2003econometric,train2009discrete}, these models %
assume that among a set of $S$ alternatives, an individual makes the choice that maximizes his/her utility $U_{is} = V_{is}+\varepsilon_{is}$, where 
$V_{is}$ and $\varepsilon_{is}$ represent the observable and unobservable parts of $U_{is}$, respectively. 
If $V_{is}$ is set as 
$V_{is} = \xv_i'\betav_s$, then marginalizing out $\boldsymbol \varepsilon_{i}=(\varepsilon_{i1},\ldots,\varepsilon_{iS})'$ leads to 
MLR if all $\varepsilon_{is}$ follow the extreme value distribution 
\citep{mcfadden1973conditional,greene2003econometric,train2009discrete}, 
and multinomial probit regression if 
all $\boldsymbol \varepsilon_{i}$ 
 follow
a multivariate normal distribution \citep{albert1993bayesian,mcculloch1994exact,mcculloch2000bayesian,imai2005bayesian}. 


Instead of examining the utilities of all choices before making the decision, 
the paSB construction is characterized by a sequential decision making process, 
described as follows. In step one, an individual decides whether to  select the choice mapped to stick 1, or to select a choice among the remaining alternatives, $i.e.$, choices $\{s:z_s\in\{2,\ldots,S\}\}$. 
If the individual selects the choice mapped to stick $1$, then the sequential process is terminated. Otherwise this choice  is eliminated and the individual proceeds to step two, in which he/she would follow the same procedure to either select the choice mapped to stick $2$ or proceed to the next step to select a choice among the remaining alternatives, $i.e.$, choices $\{s:z_s\in\{3,\ldots,S\}\}$. 
The individual, reconsidering none of the eliminated choices, will keep making a \emph{one-vs-remaining} decision at each step until the termination of the sequential decision making process. 

This unique sequential decision making procedure relaxes the independence of irrelevant alternatives (IIA) assumption, 
as described in the following Lemma. 

\begin{lem}\label{lem_iia} Under the paSB construction, the probability ratio of two choices are influenced by the success probabilities of the sticks that lie between these two choices' corresponding sticks. In other words, the probability ratio of two choices will be influenced by some other choices if they are not mapped to adjacent sticks. 
\end{lem}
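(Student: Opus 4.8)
The plan is to compute the probability ratio $p_{ic}(\zv)/p_{ic'}(\zv)$ for two arbitrary categories $c$ and $c'$ directly from the closed form \eqref{eq:p_is}, and then exhibit the factors through which the intervening sticks enter. Without loss of generality assume $z_c < z_{c'}$, i.e. category $c$ is mapped to an earlier stick than category $c'$. First I would write
\[
\frac{p_{ic}(\zv)}{p_{ic'}(\zv)} = \frac{(\pi_{iz_c})^{\mathbf{1}(z_c\neq S)}\prod_{j<z_c}(1-\pi_{ij})}{(\pi_{iz_{c'}})^{\mathbf{1}(z_{c'}\neq S)}\prod_{j<z_{c'}}(1-\pi_{ij})}.
\]
The key algebraic step is that the two products $\prod_{j<z_c}(1-\pi_{ij})$ and $\prod_{j<z_{c'}}(1-\pi_{ij})$ share all common factors with index $j<z_c$, so after cancellation the denominator retains exactly the factors $(1-\pi_{ij})$ for $z_c\le j<z_{c'}$. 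Hence
\[
\frac{p_{ic}(\zv)}{p_{ic'}(\zv)} = \frac{(\pi_{iz_c})^{\mathbf{1}(z_c\neq S)}}{(\pi_{iz_{c'}})^{\mathbf{1}(z_{c'}\neq S)}\prod_{z_c\le j<z_{c'}}(1-\pi_{ij})}.
\]

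From this identity the lemma follows by inspection: the ratio depends on $\pi_{iz_c}$ and $\pi_{iz_{c'}}$, which are the stick probabilities attached to the two categories themselves, and additionally on $\{\pi_{ij}: z_c\le j< z_{c'}\}$ — precisely the success probabilities of the sticks lying strictly between the two categories' sticks (together with stick $z_c$ itself, whose factor $(1-\pi_{iz_c})$ also appears in the denominator when $z_c<z_{c'}$). To complete the argument for the second sentence of the lemma, I would observe the contrapositive: if $z_{c'} = z_c + 1$, i.e. the two categories are mapped to adjacent sticks, then the product $\prod_{z_c\le j<z_{c'}}(1-\pi_{ij})$ collapses to the single factor $(1-\pi_{iz_c})$, so the ratio reduces to $(\pi_{iz_c})^{\mathbf{1}(z_c\neq S)}/\big[(1-\pi_{iz_c})(\pi_{iz_{c'}})^{\mathbf{1}(z_{c'}\neq S)}\big]$, which involves no $\pi_{ij}$ for any third category; conversely, when $z_{c'} > z_c+1$ there is at least one intervening stick index $j$ with $z_c< j< z_{c'}$, and the corresponding $\pi_{ij}$ — parameterized by the regression coefficients $\betav_{s:z_s=j}$ of that third category $s$ — genuinely appears in the ratio, so the ratio is in general not independent of that alternative. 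I would also note the boundary case $z_{c'}=S$, where the $\mathbf{1}(z_{c'}\neq S)$ exponent kills the $\pi_{iz_{c'}}$ factor, but this does not affect the dependence on the intervening $\pi_{ij}$.

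The one point requiring a little care — and the closest thing to an obstacle — is bookkeeping the indicator exponents $\mathbf{1}(z_c\neq S)$ and $\mathbf{1}(z_{c'}\neq S)$ so that the reference stick $S$ is handled correctly, and being precise about whether "between" is meant inclusively or exclusively of the endpoint stick $z_c$; I would state explicitly that the relevant set is $\{j: z_c\le j<z_{c'}\}$ and remark that the factor $(1-\pi_{iz_c})$ is of course a function of $c$'s own coefficients, so the only genuinely "irrelevant" alternatives entering the ratio are those mapped to sticks strictly between $z_c$ and $z_{c'}$. Everything else is a one-line cancellation, so I do not expect any real difficulty; the content of the lemma is entirely captured by the displayed ratio identity above, which is an immediate consequence of \eqref{eq:p_is} established in Theorem \ref{thm1}.
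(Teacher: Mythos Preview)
Your proposal is correct and follows essentially the same route as the paper: the paper's proof consists of exactly the displayed ratio identity you derived (written there for categories $s$ and $s+d$, with an indicator to handle both orderings of $z_s$ and $z_{s+d}$), obtained by the same cancellation of common $(1-\pi_{ij})$ factors in \eqref{eq:p_is}. Your additional remarks on the adjacent-stick contrapositive and the $z_{c'}=S$ boundary are fine and slightly more explicit than the paper's version.
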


As in Lemma \ref{lem_iia},  the paSB construction could adjust how two choices' probability ratio depends on the other alternatives
 by controlling the distance between the two sticks that they are mapped to, and hence provide a unique way to relax the IIA assumption. 
While the widely used MLR can be considered as a random-utility-maximization model with the IIA assumption, the paSB multinomial logistic model performs sequential random utility maximization 
that relaxes this assumption, 
as described in Lemma \ref{thm_latent_u} in the Appendix. 

\section{Bayesian Multinomial Softplus Regression}\label{sec:MSR}

Logistic regression is a cross-entropy-loss binary classifier that can be straightforwardly extended to paSB multinomial logistic regression (paSB-MLR). However, it is a linear classifier that uses a single hyperplane to separate one class from the other. To introduce nonlinear classification decision boundaries, we consider extending softplus regression of \citet{SoftplusReg_2016}, a multi-hyperplane binary classifier that uses the cross entropy loss, into multinomial softplus regression (MSR) under paSB. 

Softplus regression uses the interaction of multiple hyperplanes to construct a union of convex-polytope-like confined spaces to enclose the data labeled as ``1,'' which are hence separated from the data labeled as ``0''. 
It is constructed under a Bernoulli-Poisson link \citep{EPM_AISTATS2015} that thresholds at one a latent Poisson count, with the distribution of the Poisson rate defined as the convolution of the probability density functions of $K$ experts, each of which corresponds to the stack of $T$ gamma distributions with covariate-dependent scale parameters. The number of experts $K$ and the number of layers $T$ can be considered as the two model parameters that determine the nonlinear capacity of the model. More specifically, for expert $k$, denoting  $r_k$ as its weight and $\betav_k^{t+1}$ as its $t$th regression-coefficient vector, the conditional class probability can be expressed~as
\begin{align}
&P(y_i=1\given \xv_i,\{r_k,\{\betav_k^{(t+1)}\}_{1,T}\}_{1,K}) = 1 -\prod_{k=1}^K (1-p_{ik}), \notag\\
& p_{ik} 
= 1-\left(\!1\!+\!e^{\xv_i'\betav_{k}^{({T}\!+\!1)}}\ln\!\left\{1\!+\!e^{\xv_i'\betav_{k}^{({T})}}\ln\left[1\!+\!\ldots
\ln\left(1\!+\!e^{\xv_i'\betav_{k}^{(2)}}\right)\right]\right\}\right)^{-r_{k}} ;\notag
\end{align}
when $K=T=1$, the conditional class probability  reduces to
\beq
P(y_i=1\given \xv_i,r,\betav) = 1-\left(\frac{1}{1+e^{\xv_i'\betav}}\right)^r; \notag
\eeq
and when $K=T=r=1$, it becomes the same as that of binary logistic regression.
Note that a gamma process, a random draw from which is expressed as  $G=\sum_{k=1}^\infty r_k\delta_{\{\betav_k^{t+1}\}_{1,T}}$, can be used to support a potentially countably infinite number of experts for softplus regression. For this reason, one can set $K$ as large as permitted by computation and relies on the gamma process's inherent shrinkage mechanism to turn off unneeded model capacity (not all $K$ experts will be used if $K$ is set to be sufficiently large).



\subsection{paSB and parSB Extensions of Softplus Regressions}
We first follow \citet{SoftplusReg_2016} to define 
\beq \varsigma(x_1,\ldots,x_t) =
\ln\left(1+e^{x_t}\ln\left\{1+e^{x_{t-1}}\ln\left[1+\ldots
\ln\left(1+e^{x_1}\right)\right]\right\}\right) \notag 
 \eeq
 as the stack-softplus function. Note that if $t=1$, the stack-softplus function reduces to softplus function $\varsigma(x) = \ln(1+e^x)$, which is often considered as a smoothed version of the rectifier function, expressed  as $\mbox{rectifier}(x)=\max(0,x)$, that has become the dominant 
 nonlinear activation function for deep neural networks \citep{nair2010rectified, glorot2011deep, krizhevsky2012imagenet, lecun2015deep}. We then parameterize $\lambda_{iz_s} = -\ln(1-\pi_{iz_s})$, the negative logarithms of the failure probabilities of the stick that category $s$ is mapped to, as
\beqs\label{eq:lambda_is}
\lambda_{iz_s}=
\sum_{k=1}^\infty r_{sk}\, \varsigma\big(\xv'\betav_{sk}^{(2)},\ldots,\xv'\betav_{sk}^{(T+1)}\big),
\eeqs
where the countably infinite atoms $
(\betav_{sk}^{(2)},\ldots,\betav_{sk}^{(T+1)})
$ and their weights $\{r_{sk}\}_k$ constitute a draw from a gamma process $G_s\sim\mbox{GaP}(G_0,1/c_s)$ \citep{ferguson73}, with $G_0$ as a finite and continuous base distribution over a complete separable metric space $\Omega$ and $1/c_s$ 
as a scale parameter. In other words, we let $b_{iz_s}\sim\mbox{Bernoulli}(\pi_{iz_s})$ or 
\begin{align}
b_{iz_s}\sim{\mbox{Bernoulli}}\left[
1 -\prod_{k=1}^\infty \left(\!1\!+\!e^{\xv_i'\betav_{sk}^{({T}\!+\!1)}}\ln\!\left\{1\!+\!e^{\xv_i'\betav_{sk}^{({T})}}\ln\left[1\!+\!\ldots
\ln\left(1\!+\!e^{\xv_i'\betav_{sk}^{(2)}}\right)\right]\right\}\right)^{-r_{sk}} ~\!\right ]. \label{eq:b_is}
\end{align} 

As shown in Theorem 10 of \cite{SoftplusReg_2016}, $b_{iz_s}$
can be equivalently generated from
a hierarchical model that convolves countably infinite stacked gamma distributions, with covariate-dependent scales, as 
\begin{align}\small
&~~~~~\theta^{({T})}_{isk}\sim{\mbox{Gamma}}\left(r_{sk},e^{\xv_i'\betav^{({T}+1)}_{sk}}\right)
,\notag\\
&~~~~~~~~~~~~~~~~\ldots\notag\\
&~~~~~\theta^{(t)}_{isk}\sim{\mbox{Gamma}}\left(\theta^{(t+1)}_{isk},e^{\xv_i'\betav^{(t+1)}_{sk}}\right), \notag\\
&~~~~~~~~~~~~~~~~\ldots\notag\\
& 
~~~~~\theta^{(1)}_{isk}\sim{\mbox{Gamma}}\left(\theta^{(2)}_{isk},e^{\xv_i'\betav^{(2)}_{sk}}\right), \notag\\
b_{iz_s} = \mathbf{1}&(m_{is}\ge 1),~m_{is}
=\sum_{k=1}^\infty m^{(1)}_{isk},~m^{(1)}_{isk} \sim {\mbox{Pois}}(\theta^{(1)}_{isk}),
\label{eq:DICLR_model}
\end{align}
the marginalization of whose latent variables lead to \eqref{eq:b_is}.
Note the gamma distribution $\theta\sim\mbox{Gamma}(r,1/c)$ is defined such that $\E[\theta]=r/c$ and $\mbox{var}[\theta]=r/c^2$, and  the hierarchical structure in \eqref{eq:DICLR_model} can also be related to the augmentable gamma belief network proposed in \citet{GBN}. 
We consider the combination of \eqref{eq:DICLR_model} and either paSB in 
\eqref{eq:AugSB} or parSB in 
{\eqref{eq:AugSB1_y}} as the Bayesian nonparametric 
hierarchical model for multinomial softplus regression (MSR) that is defined below. 

 \begin{definition}[Multinomial Softplus Regression] 
With a draw from a gamma process for each category
that consists of countably infinite atoms $\betav_{sk}^{(2:{T}+1)}$ with weights $r_{sk}>0$, where $\betav_{sk}^{(t)}\in\mathbb{R}^{P+1}$, 
given the covariate vector $\xv_i$ and category-stick mapping 
 $\zv$, 
MSR 
 parameterizes $p_{is}$, the multinomial probability of category $s$, under the paSB construction as
 \begin{align}\vspace{-3mm}   
&p_{is}(\zv) =\textstyle \left[
1 -\prod_{k=1}^\infty \left(\!1\!+\!e^{\xv_i'\betav_{sk}^{({T}\!+\!1)}}\ln\!\left\{1\!+\!e^{\xv_i'\betav_{sk}^{({T})}}\ln\left[1\!+\!\ldots
\ln\left(1\!+\!e^{\xv_i'\betav_{sk}^{(2)}}\right)\right]\right\}\right)^{-r_{sk}} ~\!\right ]^{\mathbf{1}(z_s\neq S)}\notag\\
&\textstyle\times \prod_{j:z_j<z_s} \left[
\prod_{k=1}^\infty \left(\!1\!+\!e^{\xv_i'\betav_{jk}^{({T}\!+\!1)}}\ln\!\left\{1\!+\!e^{\xv_i'\betav_{jk}^{({T})}}\ln\left[1\!+\!\ldots
\ln\left(1\!+\!e^{\xv_i'\betav_{jk}^{(2)}}\right)\right]\right\}\right)^{-r_{jk}} ~\!\right ],\notag
\end{align}
and 
 parameterizes $p_{is} $ under the parSB construction as
 \begin{align}\vspace{-3mm}  
&p_{is}(\zv) =\textstyle \left[
\prod_{k=1}^\infty \left(\!1\!+\!e^{\xv_i'\betav_{sk}^{({T}\!+\!1)}}\ln\!\left\{1\!+\!e^{\xv_i'\betav_{sk}^{({T})}}\ln\left[1\!+\!\ldots
\ln\left(1\!+\!e^{\xv_i'\betav_{sk}^{(2)}}\right)\right]\right\}\right)^{-r_{sk}} ~\!\right ]^{\mathbf{1}(z_s\neq S)}\notag\\
&\!\!\textstyle\times \prod_{j:z_j<z_s} \left[1-
\prod_{k=1}^\infty \left(\!1\!+\!e^{\xv_i'\betav_{jk}^{({T}\!+\!1)}}\ln\!\left\{1\!+\!e^{\xv_i'\betav_{jk}^{({T})}}\ln\left[1\!+\!\ldots
\ln\left(1\!+\!e^{\xv_i'\betav_{jk}^{(2)}}\right)\right]\right\}\right)^{-r_{jk}} ~\!\right ].\notag
\end{align}
\end{definition}

For the convenience of implementation, we truncate the number of atoms of the gamma process at $K$ by choosing a discrete base measure for each category as $G_{s0} =\sum_{k=1}^K \frac{\gamma_{s0}}{K} \delta_{\betav_{sk}^{(2:{T}+1)}}$, under which we have
$r_{sk}\sim\mbox{Gamma}(\gamma_{s0}/K,1/c_{s0})$ as the prior distribution for the weight of expert $k$ in category~$s$. For each category, we expect only some of its $K$ experts to have non-negligible weights if $K$ is set large enough, and we may use $\sum_{k}\mathbf{1}\big(\sum_{i}m^{(1)}_{isk} >0\big)$, where $m^{(1)}_{isk}$ is defined in \eqref{eq:DICLR_model}, to measure the number of active experts inferred from the data. 

\subsection{Geometric Constraints for MSR}\label{sec:geometric}

Since by definition we have
$\textstyle p_{is}(\zv) = \pi_{iz_s}\big({1-\sum_{j<s}p_{is}(\zv)}\big)=\pi_{iz_s}\prod_{j<z_s}(1-\pi_{ij})$ in MSR, 
it is clear that if $\pi_{ij}$ for all $j<z_s$ are small and $\pi_{iz_s}$ is the first one to have a large probability value close to one, $y_i$ will be likely assigned to category $s$ regardless of how large the values of $\{\pi_{ij}\}_{j>z_s}$ are. 
To motivate the use of 
the seemingly over-parameterized sum-stack-softplus function 
 in \eqref{eq:lambda_is},
we first consider the simplest case of $K=T=1$.
Without loss of generality, let us assume that the category-stick mapping is fixed at $\zv=(1,\ldots,S)$.
\begin{lem}\label{lem:K1T1}
For paSB-MSR with $K=T=1$ and $\zv=(1,\ldots,S)$, the set of solutions to $p_{is}(\zv)>p_0$ in the covariate space 
are bounded by a convex polytope defined by the intersection of $s$ linear hyperplanes. 
\end{lem}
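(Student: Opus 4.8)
The plan is to read ``bounded by'' as ``contained in'' and to obtain the containment directly from the product form of $p_{is}(\zv)$, without trying to show that the region $\{\xv_i:p_{is}(\zv)>p_0\}$ is itself convex (it need not be). With $K=T=1$ the single gamma--process atom for category $s$ reduces the stick success probability to $\pi_{is}=1-(1+e^{\xv_i'\betav_s})^{-r_s}$, where I abbreviate $\betav_s:=\betav_{s1}^{(2)}$ and $r_s:=r_{s1}>0$, so that with the identity mapping $\zv=(1,\ldots,S)$ equation \eqref{eq:p_is} becomes
\begin{align}
p_{is}(\zv)=\big[1-(1+e^{\xv_i'\betav_s})^{-r_s}\big]^{\mathbf{1}(s\neq S)}\prod_{j<s}(1+e^{\xv_i'\betav_j})^{-r_j}.\notag
\end{align}
Every factor on the right lies strictly in $(0,1)$, hence $p_{is}(\zv)$ is no larger than any single factor of the product.

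First I would peel off the $s-1$ ``failure'' factors. For each $j<s$ we have $p_{is}(\zv)\le(1+e^{\xv_i'\betav_j})^{-r_j}$, so $p_{is}(\zv)>p_0$ forces $(1+e^{\xv_i'\betav_j})^{-r_j}>p_0$; raising to the power $-1/r_j<0$ (which reverses the inequality) and taking a logarithm, legitimate because $0<p_0<1$ makes $p_0^{-1/r_j}-1>0$, gives the linear constraint $\xv_i'\betav_j<\ln\!\big(p_0^{-1/r_j}-1\big)$, i.e. a half-space $H_j$. Next, when $s<S$ I would peel off the single ``success'' factor: $p_{is}(\zv)\le 1-(1+e^{\xv_i'\betav_s})^{-r_s}$, so $p_{is}(\zv)>p_0$ forces $(1+e^{\xv_i'\betav_s})^{-r_s}<1-p_0$, and the same manipulation (here $0<p_0<1$ makes $(1-p_0)^{-1/r_s}-1>0$) yields $\xv_i'\betav_s>\ln\!\big((1-p_0)^{-1/r_s}-1\big)$, a half-space $H_s$. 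Therefore $\{\xv_i:p_{is}(\zv)>p_0\}\subseteq\bigcap_{j=1}^{s}H_j$, an intersection of $s$ half-spaces whose boundary lies on $s$ linear hyperplanes, which is the asserted convex polytope for $1\le s\le S-1$.

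For the reference category $s=S$ the indicator $\mathbf{1}(z_S\neq S)$ vanishes, so only the $S-1$ failure factors survive and the argument above produces the containment in $\bigcap_{j=1}^{S-1}H_j$, a polytope with $S-1$ facets (one fewer than for $s<S$); if one insists on exactly $s=S$ hyperplanes one simply pads with a redundant half-space containing the region. It is worth remarking that when $r_j\equiv 1$ the thresholds collapse to the logistic-stick-breaking picture of Section~\ref{LSB}, and that the count $s$, versus the $S-1$ inequalities of the MLR polytope, makes precise the earlier claim that lower-indexed sticks carry fewer geometric constraints.

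The argument is a short chain of elementary scalar inequalities, so I do not expect a genuine analytic obstacle. The points that need care are: (i) verifying that each logarithm is taken of a strictly positive number, which is exactly where the hypothesis $0<p_0<1$ enters; (ii) tracking the reversal of inequalities when raising to the negative exponent $-1/r_j$; and (iii) the off-by-one for the reference category noted above. One could optionally also check that for generic coefficient vectors and sufficiently small $p_0$ the bounding polytope is nonempty with all $s$ facets active, but this is not needed for the statement as written.
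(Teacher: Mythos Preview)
Your proposal is correct and follows essentially the same route as the paper's proof: both write out $p_{is}(\zv)$ as a product of factors in $(0,1)$, bound $p_{is}(\zv)$ above by each factor separately, and then invert the monotone maps to obtain one linear half-space per factor, yielding the intersection of $s$ half-spaces. Your treatment is in fact slightly more careful than the paper's---you handle the reference category $s=S$ explicitly and note where $0<p_0<1$ is needed for the logarithms---but the argument is the same.
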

Note that the binary softplus regression with $K=T=1$ is closely related to logistic regression, and reduces to logistic regression if $r=1$ 
\citep{SoftplusReg_2016}. With Lemma \ref{lem:K1T1}, it is clear that even if an optimal category-stick mapping $\zv$ is provided, paSB-MSR with $K=T=1$ may still clearly underperform MLR. This is because category $s$ uses a single hyperplane to separate itself from the remaining $S-s$ categories, and hence uses the interaction of at most $s$ hyperplanes to separate itself from the other $S-1$ categories. 
By contrast, MLR uses a convex polytope bounded by at most $S-1$ hyperplanes for each of the $S$ categories.

When $K>1$ and/or $T>1$, an exact theoretical analysis 
is beyond the scope of this paper. 
 Instead we provide some qualitative analysis by borrowing related geometric-constraint analysis for softplus regressions in \cite{SoftplusReg_2016}. 
Note that Equation \eqref{eq:b_is} indicates that a noisy-or model  \citep{pearl2014probabilistic,srinivas1993generalization},  commonly appearing in causal inference, is used at each step of the sequential one-vs-remaining decision process; at each step, the binary outcome 
of an observation 
is attributed to the disjunctive interaction of many possible hidden causes. 
Roughly speaking, 
to enclose category $s$ to separate it from the remaining $S-s$ categories in the covariate space,
paSB-MSR with $K>1$ and $T=1$ uses the complement of a convex-polytope-bounded space, 
paSB-MSR with $K=1$ and $T>1$ uses a convex-polytope-like confined space, and paSB-MSR with both $K>1$ and $T>1$ uses a union of convex-polytope-like confined spaces. 
For parSB-MSR 
with $K + T>1$, the interpretation
is the same except a convex polytope in paSB will be replaced with the complement of a convex
polytope, and vise versa.
In  contrast to SVMs using the kernel trick, MSRs using the original covariates might be more appealing in research areas, like biostatistics and sociology, where the interpretation of regression coefficients and investigation of causal relationships are of interest. In addition, 
we find that 
the classification capability of MSRs could be further enhanced with data transformation, as will be discussed in 
Section~\ref{data_trans}. 

\section{Bayesian Multinomial Support Vector Machine}\label{sec:msvm}
Support vector machines (SVMs) are max-margin binary classifiers that typically minimize 
a regularized hinge loss objective function as
\begin{align*}
l(\betav,\nu)=\sum_{i=1}^N \max(1-b_i \xv_{i}'\betav,0)+\nu R(\betav),
\end{align*}
where $b_i\in\{-1,1\}$ represents the binary label for the $i$th observation, $R(\betav)$ is a regularization function that is often set as the $L_1$ or $L_2$ norm of $\betav$, 
 $\nu$ is a tuning parameter, 
 and $\xv_{i}'$ is the $i$th row of the design matrix $\Xmat=(\xv_1,\ldots,\xv_n)'$. For linear SVMs, $\xv_{i}$ is the covariate vector of the $i$th observation, whereas for nonlinear SVMs, one typically set the $(i,j)$th element of $\Xmat$ as the kernel distance between the covariate vector of the $i$th observation and the $j$th support vector. 
The decision boundary of a binary SVM is $\{\xv : \xv'\betav =0\}$ and an observation is assigned the label $y_i=\mbox{sign}(\xv'\betav)$, which means $b_i=1$ if $\xv'\betav\ge 0$ and $b_i=-1$ if $\xv'\betav<0$.

\subsection{Bayesian Binary SVMs}
It is shown in \citet{polson2011data} 
that the exponential of the negative of the hinge loss can be expressed as a location-scale mixture of normals as 
\begin{align}
L(b_i\given \xv_i, \betav)&=\exp\left[-2\max(1-b_i\xv_i' \betav, 0)\right]\notag\\
&=\int_0^\infty \frac{1}{\sqrt{2\pi\omega_i}}\exp\left[-\frac{1}{2}\frac{(1+\omega_i-b_i\xv_i' \betav)^2}{\omega_i}\right]d\omega_i.
\notag
\end{align}
Consequently, $L(\bv\given \Xmat, \betav)=\prod_i L(b_i\given \xv_i, \betav)=\exp\left\{-2\sum_i \max(1-b_i\xv_i' \betav, 0) \right\}$ can be regarded as a pseudo likelihood in the sense that it is unnormalized with respect to $\bv =(b_1,\ldots,b_N)'\in \{-1,1\}^N$. This location-scale normal mixture representation of the hinge loss allows developing close-form Gibbs sampling update equations for the regression coefficients $\betav$ via data augmentation, as discussed in detail in \citet{polson2011data} and further generalized in \citet{henao2014bayesian} to construct nonlinear SVMs amenable to Bayesian inference. While data augmentation has made it feasible to develop Bayesian inference for SVMs, it has not addressed a common issue that SVMs provide the predictions of deterministic class labels but not class probabilities. For this reason, below we discuss how to allow SVMs to predict class probabilities while maintaining tractable Bayesian inference via data augmentation. 

Following \citet{sollich2002bayesian} and \citet{mallick2005bayesian}, by defining the joint distribution of $\betav$ and $\{\xv_i\}_i$ to be proportional to $\prod_i [L(1\given \xv_i, \betav)+L(-1\given \xv_i, \betav)]$, 
 one may define the conditional distribution of the binary label $b_i\in\{-1,1\}$ as 
\begin{align}\vspace{-3mm}\label{stick_prob}
P(b_i\given \xv_i, \betav)= 
\begin{cases}
\displaystyle \frac1 {1+e^{-2b_i\xv_i'\betav}}, & \mbox{ for } |\xv_i'\betav| \le 1;\\
\displaystyle \frac 1 {1+e^{-b_i[\xv_i'\betav+ \mathrm{sign}(\xv_i'\betav)]}}, & \mbox{ for } |\xv_i'\betav|> 1;\\
\end{cases}
\end{align}
which defines a probabilistic inference model that has the same maximum a posteriori (MAP) solution as that of a binary SVM for a given data set. Note that for MAP inference, the penalty term $\nu R(\betav)$ of the regularized hinge loss can be related to a corresponding prior distribution imposed on $\betav$, such as Gaussian, Laplace, and spike-and-slab priors 
\citep{polson2011data}. 

\subsection{paSB Multinomial Support Vector Machine}\label{pasb_svm}
Generalizing previous work in constructing Bayesian binary SVMs, we propose multinomial SVM (MSVM) under the paSB framework that is distinct from previously proposed MSVMs \citep{crammer2002algorithmic, lee2004multicategory,liu2011reinforced}.
A Bayesian MSVM that predicts class probabilities has also been proposed before in \citet{zhang2012bayesian}, which, however, does not have a data augmentation scheme to sample the regression coefficients in closed form, 
and consequently, 
relies on a random-walk Metropolis-Hastings procedure that may be difficult to tune. 

Redefining the label sample space 
from $b_i\in\{-1,1\}$ to $b_i\in\{0,1\}$, we may rewrite \eqref{stick_prob} as $b_i\given \xv_i, \betav \sim \mbox{Bernoulli}[\pi_{i,\,\mathrm{svm}}(\xv_i,\betav)]$, where 
 \begin{align}\vspace{-3mm}\label{stick_prob1}
&\pi_{i,\,\mathrm{svm}}(\xv_i,\betav)=
\begin{cases}
\displaystyle\frac1 {1+e^{-2\xv_i\betav}}, & \mbox{ for } |\xv_i'\betav| \le 1;\\
\displaystyle \frac 1 {1+e^{-\xv_i\betav- \mathrm{sign}(\xv_i'\betav)}} , & \mbox{ for } |\xv_i'\betav|> 1.\\
\end{cases} 
\end{align}
The Bernoulli likelihood based cross-entropy-loss binary classifier, whose covariate-dependent probabilities are parameterized as in \eqref{stick_prob1}, is exactly what we need to extend the binary SVM into a multinomial classifier under paSB introduced in 
{Theorem \ref{thm1}}.
More specifically, given the category-stick mapping $\zv$, with the success probabilities of the stick that category $s$ is mapped to parameterized as $\pi_{iz_s,\,\mathrm{svm}}(\xv_i,\betav_s)$ and binary stick variables drawn as $b_{iz_s}\sim\mbox{Bernoulli}[\pi_{iz_s,\,\mathrm{svm}}(\xv_i,\betav_s)]$, we have the following definition. 

\begin{definition}[paSB multinomial SVM] Under the paSB construction, given the covariate vector $\xv_i$ and category-stick mapping $\zv$, multinomial support vector machine (MSVM) parameterizes $p_{is}$, the multinomial probability of category $s$, as
\begin{align}\vspace{-3mm} \notag 
p_{is}(\zv) 
=[\pi_{iz_s,\,\mathrm{svm}}(\xv_i,\betav_s)]^{\mathbf{1}(z_s\neq S)}\prod \nolimits_{j:z_j<z_s}\pi_{iz_j,\,\mathrm{svm}}(\xv_i,\betav_j).
\end{align} 
\end{definition}
Note that there is no need to introduce parSB-MSVM in addition to paSB-MSVM, since by definition, we have $\pi_{iz_s,\,\mathrm{svm}}(\xv_i,-\betav_s) = 1- \pi_{iz_s,\,\mathrm{svm}}(\xv_i,\betav_s)$ for all $s$.


\section{Example Results}\label{eg}

Constructed under the paSB framework, a multinomial regression model of $S$ categories is characterized by not only how the $S$ stick-specific binary classifiers with cross entropy loss parameterize their covariate-dependent probability parameters, but also how its $S$ categories are one-to-one mapped to 
$S$ latent sticks. To investigate the unique 
properties of a paSB multinomial regression model, we will study the benefits of both inferring an appropriate 
mapping $\zv$ 
and increasing 
 the modeling capacity 
 of the underlying binary regression model. 
For illustration purpose, we will focus on multinomial softplus regression (MSR) whose capacity and complexity are both explicitly controlled by $K$ and $T$.


\subsection{
Influence of Binary Regression Model Capacity
}\label{nonlinear_msr}

We first consider the Iris data set with $S=3$ categories. We choose the sepal and petal lengths as the two dimensional covariates to illustrate the performance of MSR under four different settings. We fix $\zv=(1,2,3)$, which means category $s$ is mapped to stick $s$ for all~$s$, but choose different model capacities by varying $K$ and $T$. 

Examining the relative 2D spatial locations of the observations, where the blue, black, and gray points are labeled as category 1, 2, and 3, respectively, one can imagine that setting $\zv=(2,1,3)$, which means mappings categories 2, 1, and 3 to the 1st, 2nd, and 3rd sticks, respectively, will already lead to excellent class separations for MSR with $K=T=1$, according to the analysis in Section \ref{sec:geometric} and also confirmed by our experimental results (not shown for brevity). 
More specifically, 
with the 2nd, 1st, and 3rd categories mapped to the 1st, 2nd, and 3rd sticks, respectively, one can first use a single hyperplane to separate category 2 (black points) from both categories 1 (blue points) and 3 (gray points), and then use another hyperplane to separate category 1 (blue points) from category 3 (gray points). 

However, when the mapping is fixed at $\zv=(1,2,3)$, as shown in the first row of Figure~\ref{iris}, MSR with $K=T=1$ performs poorly and fails to separate out category 1 (blue points) right in the beginning. This is not surprising since MSR with $K=T=1$ is only equipped with a single hyperplane to separate the category that the first stick is mapped to (category $z_1=1$ in this case) from the others, whereas for this data set it is apparent at least two hyperplanes are required to separate the blue from the black and gray points. 
MSR with $K=5$ and $T=1$ also fails to work with $\zv=(1,2,3)$, as shown in the third row of Figure \ref{iris}, which is also not surprising since it can only use the complementary of a convex-polytope-bound confined space to enclose category $z_1=1$, but the blue points can not be enclosed in such a manner.
Despite purposely enforcing an unfavorable category-stick mapping, 
once we increase $T$, 
the performance quickly improves, which is expected since $T>1$ allows using a single (if $K=1$ as in the second row) or a union (if $K>1$ as in the fourth row) of convex-polytope-like confined spaces to separate one category from the others (by enclosing the positively labeled observations in each stick-specific binary classification task). 

\begin{figure}[!t]
\centering
 \centering
 \includegraphics[width=.7\columnwidth]{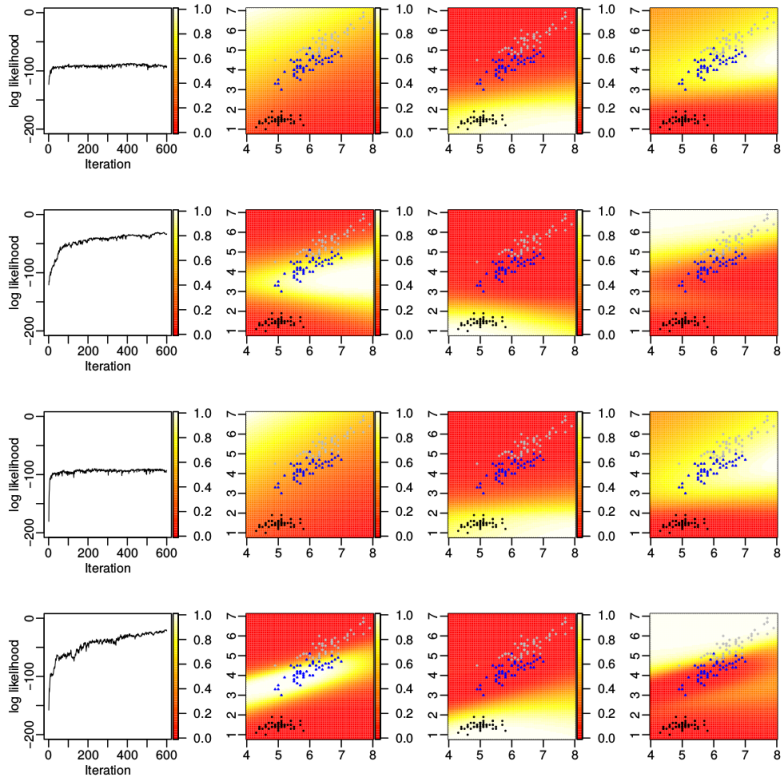}
 \caption{Log-likelihood plots and predictive probability heat maps for the 2-D iris data with a fixed category-stick mapping $\zv=(1,2,3)$. Blue, black, and gray points are labeled as categories 1, 2, and 3, respectively. For the first row, $K=1$ and $T=1$, second row, $K=1$ and $T=3$, third  row, $K=5$ and $T=1$, and fourth row, $K=5$ and $T=3$. The log-likelihood plots are shown in Column 1,
and the predictive probability heat maps of categories $1$ (blue), $2$ (black), and $3$ (gray) are shown in  Columns 2, 3, and 4, respectively. }\vspace{-3mm}\label{iris}
 \end{figure}

The results in Figure \ref{iris} show that even an unoptimized category-stick mapping, which is unfavorable to MSR with small $K$ and/or $T$, is enforced, empowering each stick-specific binary regression model with a higher capacity (using larger $K$ and/or $T$) can still allow MSR to achieve excellent separations. 
It is also simple to show that for the data set in Figure \ref{iris}, even if one chooses low-capacity stick-specific binary regression models by setting $T=1$, one can still achieve good performance with MSR if the category-stick mapping is set as $\zv=(2,1,3)$, $\zv = (3,1,2)$, $\zv=(2,3,1)$, or $\zv = (3,2,1)$. That is to say, as long as it is not category 1 (blue points) that is mapped to stick 1, MSR with $T=1$ is able to provide satisfactory performance. 

\subsection{Influence of Category-Stick Mapping and its Inference}\label{permute_z}

\begin{figure}[!t]
 \centering
 \includegraphics[width=.7\columnwidth]{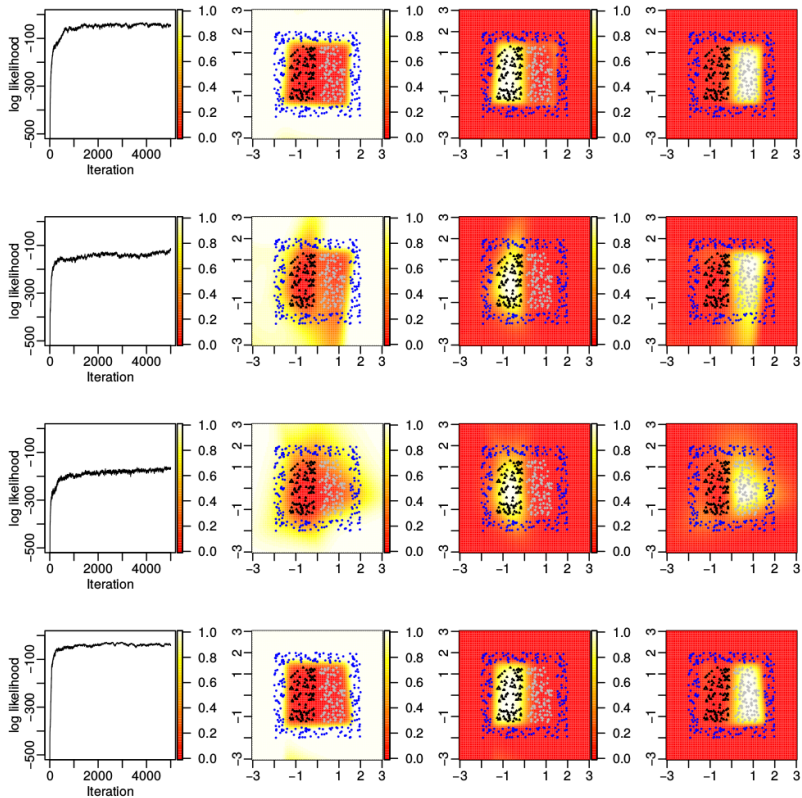}
 \caption{ Log-likelihood plots and predictive probability heat maps for the square data with $K=T=10$. The blue, black, and gray points are labeled as categories 1, 2, and 3, respectively. We fix the category-stick mapping as $\zv=(1,2,3)$ for Row 1, $(2,1,3)$ for Row 2, and $(3,1,2)$ for Row 3, and sample $\zv$ for Row 4. The log-likelihood plots are shown in Column 1, and   the predictive probability heat maps of categories 1 (blue), 2 (black), and 3 (gray) are shown in Columns 2, 3, and 4, respectively.}
 \label{square}
\end{figure}

The Iris data set in Figure \ref{iris} provides an instructive  example to show not only the importance of increasing the model capacity if a poor category-stick mapping is imposed, but also the importance of optimizing the category-stick mapping if the capacities of these stick-specific binary regression models are limited. To further illustrate the benefits of inferring an appropriate category-stick mapping $
\zv$, we consider the \textit{square} data set shown in Figure \ref{square}. {We show that for MSR, 
even if both $K$ and $T$ are sufficiently large to allow each stick-specific binary regression model to have a high enough capacity, whether an optimal category-stick mapping is selected may still clearly matter for the performance.} 

As shown in the first three rows of Figure \ref{square}, with $K=T=10$, three different $\zv$'s are considered and $\zv=(1,2,3)$ (shown in the first row) is found to perform the best. As shown in the fourth row, we sample $\zv$ using \eqref{eq:MH} within each MCMC iteration and 
achieve a result that seems as good as fixing $\zv=(1,2,3)$. In fact, we find that our inferred mappings switch between $\zv=(1,2,3)$ and $\zv=(1,3,2)$ during MCMC iterations, indicating that the Markov chain is mixing well. These results suggest the importance of both learning the mapping $\zv$ from the data and allowing the stick-specific binary classifiers to have enough capacities to model nonlinear classification decision boundaries.

When sampling $\zv=(z_1,\ldots, z_S)$ that the $S$ categories are mapped to, 
although $S!$ permutations of $(1,\ldots,S)$ can become enormous as $S$ increases, the effective search space could be much smaller if many different mappings imply similar likelihoods and if these extremely poor mappings can be easily avoided. Rather than searching for the best mapping, the proposed MH step, proposing two indices $z_j$ and $z_{j'}$ to switch in each iteration, is a simple but effective strategy to escape from the mappings that lead to poor fits. Note that the probability of a $z_j$ not being proposed to switch after $t$ MCMC iterations is $[(S-2)/S]^t$. Even if $S$ is as large as 100, this probability is less than $10^{-8}$ at $t=1000$.  Also note the iteration at which $z_j$ is proposed to switch at the first time follows a geometric distribution, with success probability 
$2/S$. Thus $S/2$ is the expected number of iterations for a  $z_j$ to be proposed to switch once.

To demonstrate the efficiency of our permutation scheme, we construct square101, a synthetic two-dimensional data set   consisting of 101 categories. We generate 8000 data points that are uniformly at random distributed within the $12\times 12$ spatial region occupied by all 101 categories.  The decision boundaries of different classes are displayed in Figure \ref{fac_100}(a), where the data points placed within the outside square frame, whose outer and inner dimensions are $12$ and $10$, respectively, are assigned to category 1, and these  placed 
within the $s$th unit square, where $s\in\{2,\ldots,101\}$, inside the square frame are assigned to category~$s$.  Although it is almost impossible to search for the best category-stick mapping $\zv$ giving rise to the highest likelihood from all $101!\approx 10^{160}$ possible mappings, we show our permutation scheme is  very effective in escaping from poor mappings, leading to a performance that is comparable to the best of those obtained with 
pre-fixed suboptimal mappings. 
More specifically, 
applying the analysis in Section \ref{sec:geometric} to Figure \ref{fac_100}(a), we expect an aSB-MSR to perform well under a fixed suboptimal category-stick mapping $\zv$, where  $z_1=1$, which means 
 the outside square frame is mapped to stick 1, and the 
squares 
closer to the inner boundary of the square frame are mapped to the sticks broken at earlier stages; the mapping $\zv=(1,2,\cdots,101)$ is such an example. In other words, we first separate the frame from all the other squares, and then sequentially separate the squares from the remainders; the closer a square is from the frame, the earlier it is separated. The total number of suboptimal mappings $\zv$'s constructed in this manner is as large as $36!\times 28! \times 20! \times 12! \times 4! \approx 10^{99.5} $.

First, we uniformly at random generate 3600 different suboptimal mappings $\zv$'s under this construction, run aSB-MSR with $K=T=4$, 
and plot the histogram of the 3600 log-likelihoods in Figure \ref{fac_100}(b). Second, we start from 3600 randomly initialized $\zv$, 
run paSB-MSR with $K=T=4$, and also plot the histogram of the 3600 log-likelihoods in Figure \ref{fac_100}(b). 
 For each run, we choose 20,000 MCMC iterations and collect the last 1000 MCMC samples. Each log-likelihood is averaged over those of the corresponding model's collected MCMC samples. 
As in Figure \ref{fac_100}(b), the log-likelihood from a paSB-MSR is in general clearly larger than that of an aSB-MSR with a fixed suboptimal $\zv$, and there is little overlap between their corresponding histograms.  
Further examining the 3600 $\zv$'s inferred by paSB at its last MCMC iteration shows that
3482 of them have $z_1=1$ and all of them have $z_1\leq 5$. 
Suppose $z_1\notin\{1,2,3,4,5\}$ at the current iteration, which means category 1 is mapped to none of the first five sticks, then the probability of not only selecting stick $z_1$, but also switching it with one of the first five sticks in the MH proposal is $\frac 1 {101}\times \frac{5}{100}$. Thus the probability that category 1 
 has never been proposed to mapped to one of the first five sticks 
  after $t$ iterations is $\left[1-{5}/({101\times 100})\right]^t$,  which becomes as small as $0.005\%$ at $t=20,000$, demonstrating the effectiveness  of our permutation scheme in dealing with a large number of categories. 
Note we have also tried 3600 aSB-MSR, each of which is provided with a randomly initialized $\zv$. The log-likelihoods, however, are all far below $-4000$ and hence not included for comparison. This phenomenon is not surprising, as the probability for a randomly initialized $\zv$ to be suboptimal is as tiny as $36!\times 28! \times 20! \times 12! \times 4! /101!\approx 10^{-60.5}$.

\begin{figure}[t!]
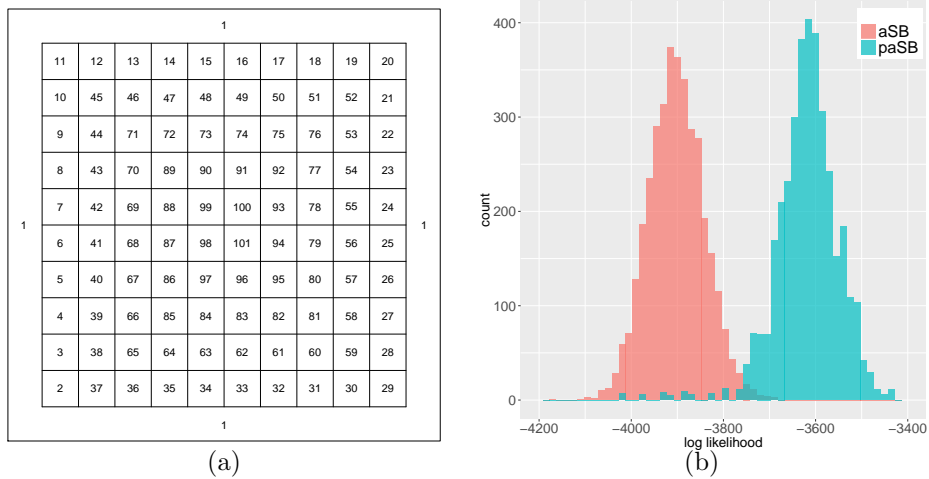

 \centering
 \begin{subfigure}[t]{0.4\textwidth}
 \centering
 \includegraphics[width=1\linewidth]{data_fac_100.pdf}\vspace{-3mm}\label{fac_100_data}
 \caption{ }\vspace{-2mm}
 \end{subfigure}%
 ~ 
 \begin{subfigure}[t]{0.4\textwidth}
 \centering
 \includegraphics[width=1\linewidth]{hist2_fac_100.pdf}\vspace{-3mm}\label{fac_100_hist}
 \caption{ }\vspace{-2mm}
 \end{subfigure}%
 \caption{(a) Illustration of the square101 data and (b) log-likelihood histograms, by aSB-MSR with 3600 random 
 suboptimal category-stick mappings and by paSB-MSR with 3600 randomly initialized category-stick mappings. 
 }\label{fac_100}
\end{figure}

Figure \ref{satimge_hist} empirically demonstrates the effectiveness of permuting $\zv$ on the satimage data set, using MSRs with $K=5$, $T=3$, and $\zv$ fixed at each of the $6!=720$ possible one-to-one category-stick mappings. 
Panels (a) and (b) show the log-likelihood histograms 
 for MSRs constructed under augmented SB (aSB) and augmented reversed SB (arSB), respectively. Both histograms are clearly left skewed, indicating under both aSB and arSB, only a small proportion of the 720 different category-stick mappings lead to very poor fits. The blue vertical lines at $-1203.82$ in (a) and $-1350.21$ in (b) are 
 the log-likelihoods by paSB and parSB, respectively, in both of which the category-stick mapping $\zv$ is updated by a MH step in each MCMC iteration. 
Only 20 (97) out of 720 aSB-MSRs (arSB-MSRs) have a higher likelihood than paSB-MSR (parSB-MSR). 
\begin{figure}[t!]
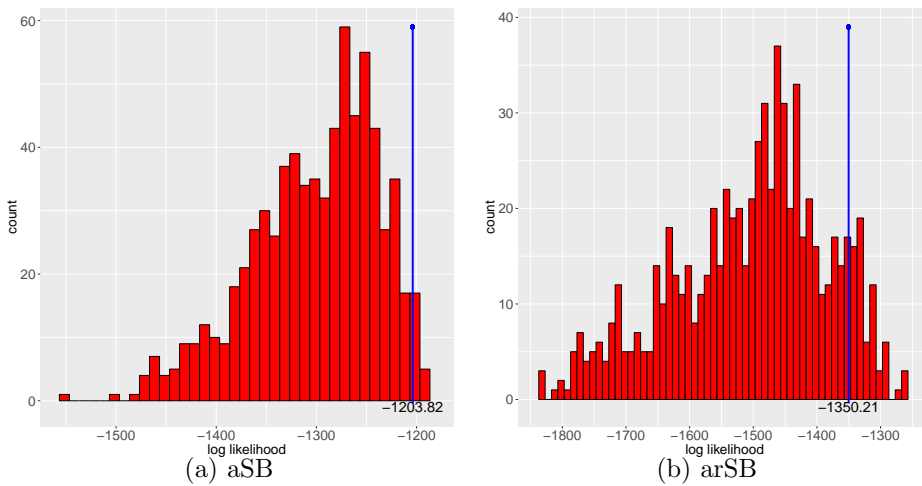

 \centering
 \begin{subfigure}[t]{0.4\textwidth}
 \centering
 \includegraphics[width=1\linewidth]{720_loglh_mean_pasb.pdf}\vspace{-3mm}\label{720pasb}
 \caption{aSB}
 \end{subfigure}%
 ~ 
 \begin{subfigure}[t]{0.4\textwidth}
 \centering
 \includegraphics[width=1\linewidth]{720_loglh_mean_parsb.pdf}\vspace{-3mm}\label{720parsb}
 \caption{arSB}
 \end{subfigure}%
 \caption{Log-likelihood histograms for MSRs using all 720 possible category-stick mappings, constructed under (a) augmented stick breaking (aSB) and (b) augmented and reversed stick breaking (arSB). The blue lines in (a) and (b) correspond to the log-likelihoods of paSB-MSR and parSB-MSR, respectively. 
 }\label{satimge_hist}
\end{figure}

Since in the stick-breaking construction, the binary classifier that separates a category mapped to a smaller-indexed stick from the others 
utilizes fewer constraints, the classification can be poor if the complexity of the decision boundary goes beyond the nonlinear modeling capacity of the binary classifier. However, even with a low-capacity binary classifier, the performance could be significantly improved if that difficult-to-separate category is mapped to a larger-indexed stick, for which there are fewer categories left to be separated in its ``one-vs-remaining'' binary classification problem. 
Examining the $\zv$'s associated with the 100 lowest log-likelihoods in Figure \ref{satimge_hist}, we find there are 51 mappings belonging to the set $\{ \zv: z_5=1$ or $z_6=1 \}$ in aSB, and 77 belonging to $ \{ \zv: z_3=1$ or $z_6=1\}$ in arSB. It suggests that separating Categories 5 or 6 (Categories 3 or 6) from all the other categories might be beyond the capacity of a binary softplus regression with $K=5$ and $T=3$ under the aSB (arSB) construction. But if breaking the sticks associated with these categories at late stages, we only need to separate them from fewer remaining categories, which could be much easier. We have further examined the other 620 arrangements, and found no evident patterns. 
These observations suggest that the effective search space of the mapping $\zv$ is considerably smaller than $S!$, and the proposed MH step is effective in escaping from poor category-stick mappings.

In paSB-MSVM, we use a Gaussian radial basis function kernel, whose kernel width is cross validated from a set of predefined candidates. We find its performance to be sensitive to the setting of the kernel width, which is a common issue for SVMs \citep{cherkassky2004practical, soares2004meta,chang2005scaling}. If an appropriate kernel width could be identified through cross validation, we find that learning the mapping $\zv$ becomes less important for paSB-MSVM to perform well. However, we find that if the kernel width is not well selected, which can happen if all candidate kernel widths are far from the optimal value, the binary classifier for each category may not have enough capacity for nonlinear classification and the learning of the category-stick mapping $\zv$ could then become 
 important.

\subsection{Turning Off Unneeded Model Capacities}

 \begin{figure}[t]
\centering
 \includegraphics[width=.44\columnwidth]{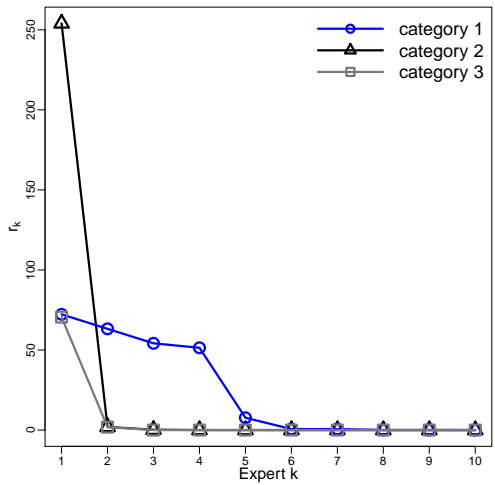}
 \caption{Inferred expert weights $r_k$ in descending order for each category of the square data with $K=T=10$.} 
 \label{rk}
\end{figure}

While one can adjust both $K$ and $T$ to control the capacity of binary softplus regression, for MSR, 
 the total number of experts $K$ is a truncation level that can be set as large as permitted by the computation budget. This is because the truncated gamma process used by each stick-specific binary softplus regression  shrinks 
 the weights of unnecessary experts 
 towards zeros. 
 Figure \ref{rk} shows in decreasing order the inferred weights of the experts belonging to each of the 3 categories of the square data set. These weights are inferred by MSR with $K=T=10$ and the learning of $\zv$, 
 as in the fourth row of Figure \ref{square}. It is clear from Figure \ref{rk} that only a small number of experts are inferred with non-negligible weights in the posterior, and the number of active experts and their weights indicate the complexity of the corresponding classification decision boundaries shown in the fourth row of Figure \ref{square}. 
We note that while $T$ is a parameter to be set by the user, we find increasing it increases model capacity, without observing clear signs of overfitting for all the data considered here. 

  %

\subsection{MSR with Data Transformation}\label{data_trans}

\begin{figure}[t]
\centering
\includegraphics[width=1\columnwidth]{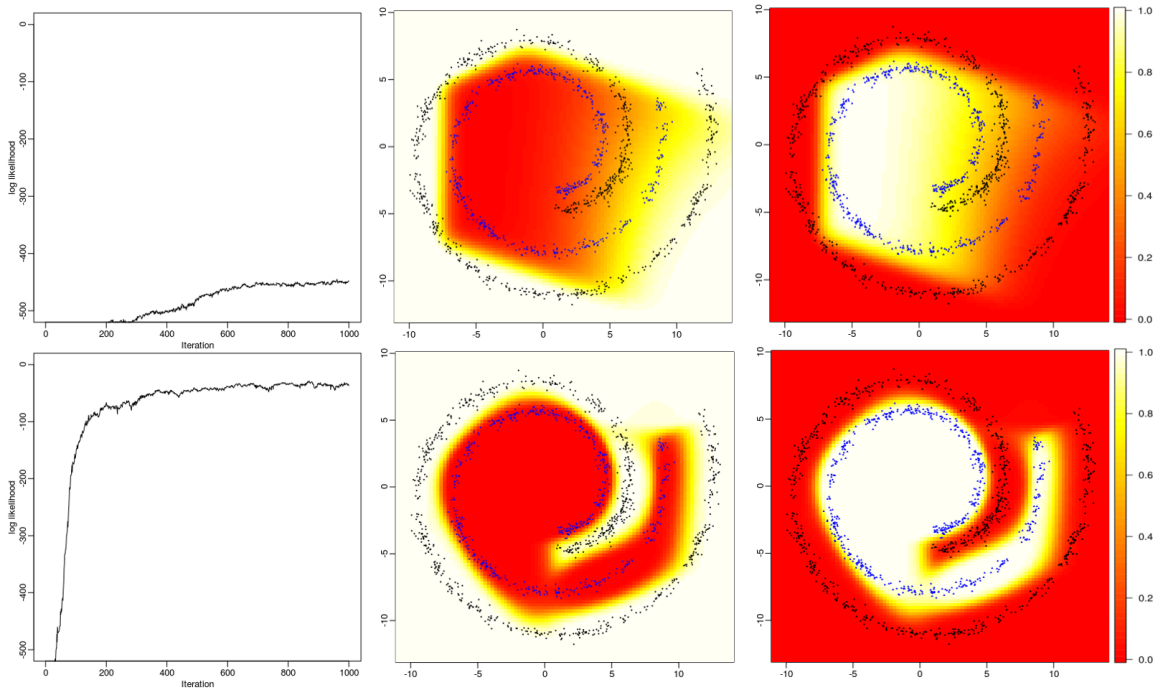}
\caption{First row: classification of a 2-D swiss roll data by paSB-MSR with $K=5$, $T=3$, using the original covariates. Second row: paSB-MSR with $K=5$, $T=1$ trained on the covariates transformed via the paSB-MSR  used in the first row.  
In each row, the left column plot the log-likelihood against MCMC iteration, and the middle and right columns  show the predictive probability heatmaps for Category 1 (black points) and Category 2 (blue points), respectively.} 
\label{swissroll}
\end{figure}

Kernel 
SVMs transform the data to make different categories more linearly separable in the transformed covariate space. 
While kernel SVMs may provide high nonlinear modeling capacity, its performance could be sensitive to the kernel width, which often needs to be cross validated, and its number of support vectors often increases linearly in the size of the training set. 
By contrast, MSRs rely on the interactions of linear hyperplanes to construct nonlinear decision boundaries,  as discussed in Section \ref{sec:geometric}, and hence may have insufficient capacity for highly complex nonlinearity. However, we may simply stack another MSR on a previously trained MSR to quickly enhance its nonlinear modeling capacity. In particular, we may first run a MSR to obtain a finite set of hyperplanes denoted by $\tilde{\betav}_{jk}^{(t+1)}$. We may then augment the original covariate vector $\xv_i$ as
\begin{align}
\tilde{\xv}_i:=\left[ \xv_i', \log\big(1+e^{\xv_i'  \tilde{\betav}_{11}^{(2)}}\big),
 \cdots, \log\big(1+e^{\xv_i'  \tilde{\betav}_{jk}^{(t+1)}}\big),\cdots, 
 \log\big(1+e^{\xv_i' \tilde{\betav}_{SK}^{(T+1)}}\big)\right]'\label{eq:data_trans}
\end{align}
and run another MSR with the  transformed covariates $\tilde{\xv}_i$. 

%

For illustration, we show the efficacy of this data-transformation strategy on a 2-D swiss roll data in Figure \ref{swissroll}. The first row shows the results of MSR with $K=5$ and $T=3$, using the original covariates $\xv_i$, while the second row shows MSR with $K=5$ and $T=1$, using the transformed covariates $\tilde{\xv}_i$ defined by \eqref{eq:data_trans}, where the regression coefficient vectors $\tilde{\betav}_{jk}^{(t+1)}$ are learned using the MSR illustrated in the first row. It is evident that  the classification is greatly improved in terms of both training log-likelihood and out-of-sample predictions. 

\subsection{Results on Benchmark Data Sets}\label{comparison}

To further evaluate the performance of the proposed paSB multinomial regression models, we consider paSB multinomial logistic regression (paSB-MLR), paSB multinomial robit with $\kappa=6$ degrees of freedom  (paSB-robit), paSB multinomial support vector machine (paSB-MSVM), and MSRs. We compare their performance with those of $L_2$ regularized multinomial logistic regression ($L_2$-MLR), support vector machine (SVM), and adaptive multi-hyperplane machine (AMM), 
and consider the following benchmark multi-class classification data sets: iris, wine, glass, vehicle, waveform, segment, 
dna, and satimage. We also include the synthetic square data shown in Figure 2 for comparison.  For SVM we use the
LIBSVM package, \color{black} which 
trains $S(S-1)/2$ one-vs-one binary classifiers and makes prediction using majority voting  \citep{LIBSVM}. We run LIBSVM in \texttt{R} with  package \texttt{e1071} \citep{e1071}. 
We consider MSRs with $(K,T)$ as $(1,1)$, $(1,3)$, $(5,1)$, and $(5,3)$,  respectively. We also consider MSR with data transformation (DT-MSR), in which we first train a MSR with $K=5$ and $T=3$ to transform the covariates and then stack another MSR with $K=5$ and $T=1$. We provide detailed descriptions on the data and experimental settings in the Appendix.

With the number of categories in parentheses right after the data set names, we summarize in Table \ref{compare} the classification error rates by various models, 
where those of MSRs are calculated by averaging over paSB and parSB. 
Table \ref{compare} shows that an MSR 
with $K$ or $T$ sufficiently large generally outperforms paSB-MLR,  paSB-robit, $L_2$-MLR, and AMM, 
and using another MSR on the transformed covariates 
can in general further reduce the error rate. This is especially evident when there are nonlinearly separable categories, as indicated by a clearly higher error rate of $L_2$-MLR in contrast to that of SVM. One may notice that paSB-robit, paSB-MLR, and MSR with $K=T=1$ are similar to $L_2$-MLR in terms of performance, 
suggesting the effectiveness of the proposed permutation scheme, 
which helps mitigate the potential adverse effects of having asymmetric class labels. 
One may also note that paSB-robit outperforms paSB-MLR on glass, vehicle, waveform, dna, and satimage, 
indicating there are  benefits 
in using a robust classifier on these data sets. 
Comparable error rates of paSB-MSVM to SVM and  better performance of MSRs on most data sets demonstrate the success of the paSB framework in transforming a binary classifier with cross entropy loss into a Bayesian multinomial one. 

\begin{table}[t]
\centering
\label{compare}
\makebox[\linewidth]{
\resizebox{\linewidth}{!}{%
\begin{tabular}{l
P{1.10cm}P{1.10cm}P{1.10cm}P{1.10cm}P{1.10cm}P{1.10cm}P{1.10cm}P{1.10cm}P{1.10cm}P{1.15cm}P{1.15cm}}
 \hline
Data (S) & paSB-MLR & paSB-robit& paSB-MSVM & $K=1$ $T=1$ & $K=1$ $T=3$ & $K=5$ $T=1$ & $K=5$ $T=3$ & DT-MSR & $L_2$-MLR & SVM & AMM \\ 
 \hline
square(3)& 59.52 &67.46  & \textbf{0} & 57.14 & 15.08 & \bf0 & \bf0 &\bf0 & 62.29 & 4.76 & 16.67 \\ 
 iris(3) & 4.00&5.33  &  \bf3.33 & 4.67 & 4.00 & 4.00 & 4.00 & \bf3.33 &  \bf3.33 & 4.00 & 4.67 \\ 
 wine(3) & 4.44&5.00  & 2.78 & 2.78 &  \bf2.22 & 2.78 &  \bf2.22 &2.78& 3.89 & 2.78 & 3.89 \\ 
 glass(6) & 35.35&34.88  & 29.30 & 33.49 & 26.05 & 31.16 & 32.09 & \bf 26.37& 33.02 & 28.84 & 37.67 \\ 
 vehicle(4) & 23.23&21.25  & 17.32 & 22.44 & 17.32 & 17.72 & 15.75 & \bf 14.96& 22.83 & 18.50 & 21.89 \\ 
 waveform(4) & 17.87& 16.42  & 15.76 & 19.84 & 16.62 & 15.67 &  \bf 15.04 & 15.56 & 15.60 & 15.22 & 18.54 \\ 
 segment(7)& 7.36&8.03  & 7.98 & 6.20 & 6.49 & 6.45 &  \bf 5.63 & 7.65 & 8.56 & 6.20 & 12.47 \\ 
 dna(3) & 5.06 &4.05& 5.31 & 4.13 & 4.47 & 4.55 & 4.22 & \bf 3.88 & 5.98 & 4.97 & 5.43 \\ 
 satimage(6) & 20.65&17.25 & 8.90 & 16.65 & 14.45 & 12.85 & 12.00 & 9.85 & 17.80 &  \bf 8.50 & 15.31 \\ 
 \hline
\end{tabular}
}
}
\caption{Comparison of the classification error rates (\%) of paSB-MLR, paSB-robit, paSB-MSVM, MSRs with various $K$ and $T$ (columns 5 to 8),  MSR with data transformation (DT-MSR), $L_2$-MLR, SVM, and AMM.}\label{compare}
\end{table}


To further check whether a paSB model is attractive when fast out-of-sample prediction is desired, we consider 
using only the MCMC sample that has the highest training likelihood among the collected ones for all paSB models, and 
summarize in Table \ref{compare1} of the Appendix the classification error rates of various models, with the number of inferred support vectors or active hyperplanes included in parenthesis. 
Following the definition of active experts 
in \citet{SoftplusReg_2016}, we define for MSRs the number of active hyperplanes as $T\sum_s^S \widetilde K_s$ where $\widetilde K_s$ is the number of active experts for class $s$. The number of active hyperplanes determines the computational complexity for out-of-sample prediction with a single MCMC sample, which is $O(T\sum_s^S\widetilde K_s)$. Since the error rates of MSRs in Table \ref{compare1} are calculated by averaging over both paSB and parSB, the number of active hyperplanes is $T\sum_{s}^S(\widetilde K_s^{(paSB)}+\widetilde K_s^{(parSB)})$. 

Shown in Figure \ref{boxplot} in the Appendix are boxplots of the number of each category's active experts for MSR with $K=5$ and $T=3$. 
Except for several categories of satimage that require all $K=5$ experts for parSB-MSR, $K=5$ is large enough to provide the 
needed model capacity under all the other scenarios. 
As shown in Table \ref{compare1}, MSRs with sufficiently large $K$ and/or $T$ are comparable to both SVM and paSB-MSVM in terms of the error rates, while clearly outperforming them in terms of the number of (active) hyperplanes/support vectors and hence computational complexity for out-of-sample predictions. While MSR with $K=T=1$,  paSB-MLR, and paSB-robit generally perform worse than SVM in terms of the error rates, they use much fewer hyperplanes and hence have significantly lower computation for out-of-sample predictions. In summary, MSR whose upper-bound for the number of active expects $K$ and number of layers for each expert $T$ can both be adjusted to control its capacity of modeling nonlinearity, can achieve a good compromise between the accuracy and computational complexity for out-of-sample prediction of multinomial class probabilities, and can be further improved by training an additional MSR on the transformed covariates.

\begin{table}[t]
\centering
\makebox[\linewidth]{
\resizebox{\linewidth}{!}{
\begin{tabular}{lP{1.10cm}P{1.10cm}P{1.10cm}P{1.10cm}P{1.10cm}P{1.10cm}P{1.10cm}P{1.10cm}P{1.10cm}P{1.10cm}P{1.10cm}P{1.10cm}}
  \hline
  &  \multicolumn{4}{c}{$10\%$ quantile}  &  \multicolumn{4}{c}{median} &  \multicolumn{4}{c}{$90\%$ quantile} \\
&  \multicolumn{2}{c}{training}  &  \multicolumn{2}{c}{testing} &  \multicolumn{2}{c}{training}  &  \multicolumn{2}{c}{testing} &  \multicolumn{2}{c}{training}  &  \multicolumn{2}{c}{testing} \\  
& Bayes MLR & paSB-MLR & Bayes MLR & paSB-MLR & Bayes MLR & paSB-MLR & Bayes MLR & paSB-MLR & Bayes MLR & paSB-MLR & Bayes MLR & paSB-MLR \\ 
  \hline
square & 411.46 & 194.54 & 421.62 & 256.48 & 913.17 & 948.53 & 858.46 & 952.33 & 924.48 & 973.60 & 927.93 & 976.33 \\ 
  iris & 82.30 & 90.33 & 73.75 & 84.40 & 149.47 & 218.21 & 156.25 & 174.16 & 331.71 & 854.45 & 341.39 & 793.00 \\ 
  wine & 194.41 & 314.41 & 58.41 & 56.30 & 467.73 & 859.67 & 506.66 & 643.90 & 926.55 & 991.46 & 958.12 & 994.77 \\ 
  glass & 70.18 & 162.69 & 67.81 & 138.90 & 137.18 & 335.14 & 122.27 & 329.21 & 359.75 & 686.43 & 347.40 & 615.02 \\ 
  vehicle & 77.71 & 103.30 & 74.05 & 101.64 & 133.66 & 230.83 & 127.22 & 230.48 & 426.44 & 460.07 & 414.48 & 453.87 \\ 
  waveform & 123.77 & 120.01 & 120.99 & 113.94 & 199.00 & 203.38 & 191.77 & 209.61 & 310.84 & 499.05 & 291.96 & 478.59 \\ 
  segment & 114.91 & 104.77 & 95.63 & 91.31 & 281.11 & 294.17 & 270.63 & 355.46 & 742.63 & 844.11 & 752.74 & 814.62 \\ 
  dna & 217.99 & 238.48 & 63.66 & 67.26 & 481.65 & 736.58 & 505.59 & 772.32 & 911.68 & 986.60 & 927.30 & 991.63 \\ 
  satimage & 53.51 & 66.19 & 54.04 & 66.33 & 82.50 & 90.50 & 81.87 & 90.11 & 160.84 & 168.85 & 156.83 & 157.31 \\ 
   \hline
\end{tabular}
}
}
\caption{Comparison of the ESS of the conditional class probability between Bayes MLR and paSB-MLR.}\label{ess}
\end{table}

We further measure how well  the Gibbs sampler is mixing using effective sample size (ESS) for both paSB-MLR and Bayesian multinomial logistic regression (Bayes MLR) of \citet{polson2013bayesian}. For both algorithms we let 
$
\betav_j\sim\mathcal{N}\left(0,\mbox{diag}(\alpha_{j0}^{-1},\ldots,\alpha_{jV}^{-1})\right)$, where $\alpha_{jv}\sim\mbox{Gamma}(0.001,1/0.001). 
$
   The ESS \citep{holmes2006bayesian} of a parameter or a function of parameters 
  is defined as 
$
\mbox{ESS}=L/\left[ 1+ 2 \sum_{h=1}^\infty \rho(h)\right],
$ 
where $L$ is the number of post-burn-in samples, $\rho(h)$ is the $h$th autocorrelation of the parameter or the function of parameters. It describes how quickly an MCMC algorithm generates independent samples. 
Since  the Gibbs sampler of Bayes MLR samples one $\betav_j$ conditioning on all $\betav_{j'}$ for $j'\neq j$, 
  which may lead to strong dependencies between different categories and hence slow down the mixing of the Markov chain. By contrast, 
  the 
  $\betav_{j}$'s are conditionally independent given the augmented variables $b_{ij}$'s in paSB-MLR, which may lead to faster mixing.  
  For both Bayes MLR and paSB-MLR, we consider five independent random trials, in each of which we randomly initialize the model parameters, run 10,000 Gibbs sampling iterations, and collect the last 1,000 MCMC samples of  $\betav_{j}$. 
We use the \texttt{mcmcse} package \citep{mcmcse} to estimate the ESS of each $p_{ij}$ in a random trial using the 1,000 collected MCMC samples. For the training set, we calculate the 10\% quantile, median, and 90\% quantile of the ESSs of all $p_{ij}$ for each random trial, and then report their averages over the five random trials in  Table \ref{ess}. For the testing set, we follow the same steps and report the results in Table \ref{ess}.
While paSB-MLR underperforms Bayes MLR on some of the data sets for the 10\% ESS quantile, they consistently outperform Bayes MLR on all data sets for both the ESS median and $90\%$ ESS quantile, for both training and testing.


\begin{figure}[t]
\centering
 \centering
 \includegraphics[width=.45\columnwidth]{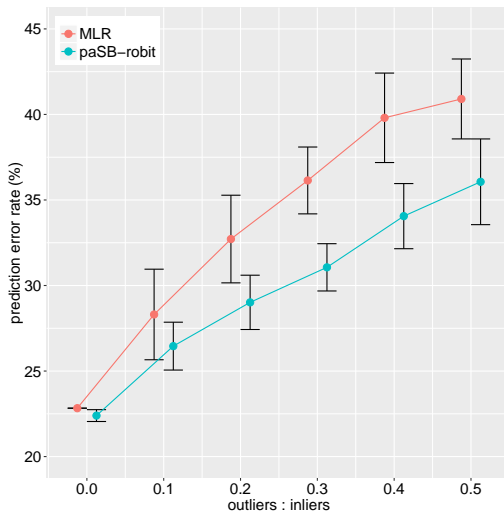} 
 \caption{Prediction error rates (\%, mean $\pm$ standard deviation) for different ratios of outliers to inliers.} 
 \label{fig:robust}
\end{figure}

\subsection{Robustness of paSB-Robit Regression}\label{sec:robust}
We use the contaminated vehicle data to demonstrate the robustness of paSB-robit. 
As discussed by \citet{liu2004robit}, the heavy-tailed conditional class probability function of robit regression can robustify the decision boundary when there exist outliers. We use the  vehicle training set as inliers, synthesize outliers that are far from inliers,  combine both as the new training set, and keep the testing set unchanged. We generate different numbers of outliers so that the ratio of outliers to inliers varies from $0$, $0.1$, $0.2$, $0.3$, to $0.5$, at each of which we randomly simulate 10 different sets of outlier covariates. We provide  the details on how we generate outliers in the Appendix.

We compare $L_2$-MLR and paSB-robit with $\kappa=1$ degree of freedom  on the contaminated vehicle data. Figure \ref{fig:robust} shows the prediction error rate (mean $\pm$ standard deviation) of the testing set for different outlier-inlier ratios. When there are no outliers, both approaches delivers comparable performances. As the ratio increases, paSB-robit with $\kappa=1$ more and more clearly outperforms $L_2$-MLR, 
which justifies the robustness of paSB-robit.

\section{Conclusions}\label{sec:conclude}

To transform a cross-entropy-loss binary classifier into a Bayesian multinomial regression model and derive efficient Bayesian inference, we develop a permuted and augmented stick-breaking construction. With permutation, we one-to-one map the categories to sticks to escape from poor category-stick mappings that impose restrictive geometric constraints on the decision boundaries, and with augmentation, we link a category outcome to conditionally independent stick-specific covariate-dependent Bernoulli random variables. 
 We illustrate this general framework by extending binary softplus regression, robit regression, and support vector machine into multinomial ones. Experiment results validate our contributions and show that the proposed multinomial softplus regressions achieve a good compromise between interpretability, complexity, and predictability. 

\subsection*{Acknowledgments}
The authors would like to thank the editor and two anonymous referees for their insightful
and constructive comments and suggestions, and Texas Advanced Computing Center for computational support.

%


\appendix

\section{Additional Lemma and Proofs}

\begin{proof}[Proof of Theorem \ref{thm1}]
The conditional probability of $y_i$ given $\{z_s,\pi_{is}\}_{1,S}$ can be expressed as
\begin{align*} 
\footnotesize
 \textstyle
P(y_i = s\given \{z_s,\pi_{is}\}_{1,S}) & =\textstyle\sum_{b_{ij}:j>z_s}\textstyle \left[P(b_{iz_s}=1)\right]^{\mathbf{1}(z_s\neq S)} \left[\prod_{j<z_s}P(b_{ij}=0)\right]\left[ \prod_{j>z_s}P(b_{ij}) \right]\notag\\
 &=\textstyle\left[P(b_{iz_s}=1)\right]^{\mathbf{1}(z_s\neq S)}\left[\prod_{j<z_s}P(b_{ij}=0)\right] \sum_{b_{ij}:j>z_s} \left[ \prod_{j>z_s}P(b_{ij}) \right], \notag 
 \end{align*}
which becomes the same as \eqref{eq:p_is} by applying \eqref{eq:AugSB} and 
 $\sum_{b_{ij}:j>z_s} \!\big[ \prod_{j>z_s}\! P(b_{ij}) \big]
 \!=\!1$. 
\end{proof}\vspace{-2.5mm}

\begin{proof}[Proof of Lemma \ref{lem_iia}] Under the paSB construction, the probability ratio of categories (choices) $s$ and $s+d$ is a function of the stick success probabilities $\pi_{z_{s}}, \pi_{z_{(s+1)}},\cdots, \pi_{z_{(s+d)}}$. More specifically,
\beq
 \frac{p_{i(s+d)}(\zv)}{p_{is}(\zv)}= 
 \frac {
 \pi_{iz_{(s+d)}}^{\mathbf{1}(z_{(s+d)}\neq S)} 
 \left[\prod_{z_s\le j <z_{(s+d)}}(1-\pi_{ij})\right]^{\delta(z_s\le z_{(s+d)})} 
 } 
 { 
 \pi_{iz_s}^{\mathbf{1}(z_s\neq S)}
 \left[\prod_{z_{(s+d)}\le j < z_s}(1-\pi_{ij})\right]^{\delta(z_s > z_{(s+d)})}
 }. \notag
\eeq
\end{proof}

\begin{proof}[Proof of Lemma \ref{lem:K1T1}]
Since $
p_{is}(\zv) = \left[1-\big({1+e^{\xv_i'\betav_s}}\big)^{-r_s}\right]^{\mathbf{1}(s\neq S)} \prod_{j<s} \big({1+e^{\xv_i'\betav_j}}\big)^{-r_j} 
$ when $K=T=1$ and $ \zv=(1,\ldots,S)$, 
the set of solutions to $p_{is}>p_0$ are bounded by the set of solutions to 
$
\big({1+e^{\xv_i'\betav_j}}\big)^{-r_s}>p_0,~j\in\{1,\ldots,s-1\},
$
and
$
1-\big({1+e^{\xv_i'\betav_s}}\big)^{-r_s}>p_0,
$
and hence bounded by the convex polytope defined by the set of solutions to the $s$ inequalities as
\beq 
\xv_i' [(-1)^{\mathbf{1}(j=s)}\betav_{j}]< (-1)^{\mathbf{1}(j=s)} \ln\left\{\left[p_0^{\mathbf{1}(j\neq s)}(1-p_0)^{\mathbf{1}(j= s)}\right]^{-\frac{1}{r_{j}}}-1\right\}, ~~ j\in\{1,\ldots,s\}. \notag
\eeq
\end{proof}

\begin{lem}\label{thm_latent_u}
Without loss of generality, let us assume that the category-stick mapping is fixed at $\zv=(1,2,\cdots,S)$.
The paSB multinomial logistic model that assigns choice $s\in\{1,\ldots,S\}$ for individual $i$ with probability $p_{is}= (\pi_{is})^{\mathbf{1}(s\neq S)}\prod_{j<s}(1-\pi_{ij})$, where $\pi_{is}=1/(1+e^{-W_{is}})$, can be considered as a sequential random utility maximization model. This model selects choice $s$ once $U_{is}>\sum_{j\ge s} U_{ij}$ is observed for $s=1,\ldots,S$, where $U_{is}$ are defined as
\begin{align*}
U_{i1}&=U_{i2}+\cdots+U_{iS}+W_{i1}+\varepsilon_{i1},\\
&\cdots\\
U_{is}&=\sum_{j>s}U_{ij}+ W_{is}+\varepsilon_{is},\\
&\cdots\\
U_{i(S-1)}&=W_{i(S-1)}+\varepsilon_{i(S-1)},\\
U_{iS}&=0,
\end{align*}
and $\varepsilon_{is}\overset{i.i.d.}{\sim} \emph{\mbox{Logistic}}(0,1)$ are independent, and identically distributed (i.i.d.) random variables following the standard logistic distribution. 
\end{lem}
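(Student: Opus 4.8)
The plan is to reduce the statement to Theorem~\ref{thm1}: I will show that the sequential utility-maximization procedure produces exactly the independent Bernoulli stick variables $b_{ij}\sim\mbox{Bernoulli}(\pi_{ij})$ of the paSB construction under the identity mapping $\zv=(1,\ldots,S)$, together with the same selection rule, so that Theorem~\ref{thm1} identifies its output law as $\sum_{s=1}^S p_{is}\delta_s$ with $p_{is}=(\pi_{is})^{\mathbf{1}(s\neq S)}\prod_{j<s}(1-\pi_{ij})$.

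First I would analyze step $s$ of the sequential scan. Rearranging the defining recursion $U_{is}=\sum_{j>s}U_{ij}+W_{is}+\varepsilon_{is}$ gives $U_{is}-\sum_{j>s}U_{ij}=W_{is}+\varepsilon_{is}$ for $s=1,\ldots,S-1$, so the event that choice $s$ exceeds the combined utility of all still-available alternatives, $\{U_{is}>\sum_{j>s}U_{ij}\}$, is exactly $\{W_{is}+\varepsilon_{is}>0\}$. The crucial point is that, although the utilities $U_{is}$ are heavily coupled through the recursion (anchored at $U_{iS}=0$), the $s$-th comparison depends, given the deterministic $W_{is}$, only on the single noise term $\varepsilon_{is}$; this is what decouples the $S$ successive decisions. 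For $s=S$ the comparison is degenerate ($U_{iS}=0$ and the right-hand sum is empty), which matches the stick convention $\pi_{iS}=1$: category $S$ is selected precisely when all the earlier comparisons fail.

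Next I would compute the law of $b_{is}:=\mathbf{1}(W_{is}+\varepsilon_{is}>0)$. Since $\varepsilon_{is}\sim\mbox{Logistic}(0,1)$ has cdf $F(x)=(1+e^{-x})^{-1}$, we get $P(b_{is}=1)=P(\varepsilon_{is}>-W_{is})=1-F(-W_{is})=(1+e^{-W_{is}})^{-1}=\pi_{is}$; and because $\varepsilon_{i1},\ldots,\varepsilon_{iS}$ are mutually independent, so are $b_{i1},\ldots,b_{iS}$. Hence the sequential procedure draws $b_{ij}\overset{\mathrm{ind}}{\sim}\mbox{Bernoulli}(\pi_{ij})$ and picks the smallest $s$ with $b_{is}=1$ (defaulting to $S$ when $b_{i1}=\cdots=b_{i(S-1)}=0$) --- which is literally the selection rule \eqref{eq:AugSB}--\eqref{eq:paSB} with $z_s=s$. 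Invoking Theorem~\ref{thm1} with $\zv=(1,\ldots,S)$ then gives $y_i\sim\sum_{s=1}^S p_{is}\delta_s$ with the stated $p_{is}$; alternatively one finishes by hand, using independence of the $b_{is}$ to obtain $P(\mbox{select }s)=\pi_{is}\prod_{j<s}(1-\pi_{ij})$ for $s<S$ and $\prod_{j<S}(1-\pi_{ij})$ for $s=S$, which sum to one by telescoping.

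There is no real obstacle; the only things requiring care are (a) reading the ``backwards'' recursion correctly so that the step-$s$ difference genuinely collapses to $W_{is}+\varepsilon_{is}$, and (b) matching the indexing of the one-vs-remaining scan to the stick order under the identity $\zv$, including the degenerate treatment of the reference category. The substantive content is exactly observation (a): the recursive bookkeeping of remaining utilities is arranged so that the dependence among the $U_{is}$ cancels in each comparison, turning one $S$-way sequential choice into $S$ independent Bernoulli trials.
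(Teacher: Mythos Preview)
Your proposal is correct and follows essentially the same argument as the paper: both reduce the step-$s$ comparison to $W_{is}+\varepsilon_{is}>0$ via the recursion, read off $\pi_{is}$ from the logistic cdf, and use independence of the $\varepsilon_{is}$ to factorize the selection probability. The only cosmetic difference is that the paper computes $P(y_i=s)=\pi_{is}\prod_{j<s}(1-\pi_{ij})$ directly (your ``alternatively one finishes by hand'' route), whereas you additionally frame the result as an instance of Theorem~\ref{thm1}; both lead to the same one-line computation.
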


\begin{proof}[Proof of Lemma \ref{thm_latent_u}]
Note that $P(\varepsilon<x)=1/(1+e^{-x})$ if $\varepsilon\sim\mbox{Logistic}(0,1)$. First consider the choice of individual $i$ be $y_i=1$, which would happen with probability 
\begin{align*}
P(y_i=1)=P\left(U_{i1}>\sum \nolimits_{j\geq 1} U_{ij}\right)=P(\varepsilon_{i1}>-W_{i1})=1/(1+e^{-W_{i1}})=\pi_{i1}=p_{i1}.
\end{align*} 
Then for $s=2,\cdots, S-1$,
\begin{align*}
P(y_i=s)&=P(y_i=s\given y_i>s-1)P(y_i>s-1)\\
&= P\left(U_{is}>\sum \nolimits_{j>s}U_{ij}\right) \prod\nolimits_{j\leq s-1} P\left(U_{ij}<\sum \nolimits_{j'>j}U_{ij'}\right)\\
&= P(\varepsilon_{is}>-W_{is}) \prod\nolimits_{j\leq s-1} P(\varepsilon_{ij}<-W_{ij})\\
&= \pi_{is}\prod\nolimits_{j\leq s-1}(1-\pi_{ij})\\
&=p_{is}.
\end{align*}
Finally, $P(y_i=S)=1-\sum_{j<S} P(y_i=j)=\prod_{j<S}(1-\pi_{ij})=p_{iS}$.
\end{proof}

\section{Experimental Settings and Additional Results}
The table below summarizes the sizes of both training and testing sets, and the number of covariates and categories. The training and testing sets are predefined for vehicle, 
dna, and satimage. Note that the training and validation sets are combined as training. We divide the other data sets into training and testing as follows. For iris, wine, and glass, five random partitions are taken such that for each partition the training set accounted for 80\% of the whole data set  while the testing set 20\%. The classification error rate is calculated by averaging the error rates of all five random partitions. For square, waveform, and segment, only one random partition is taken, where 70\% of the square data set are used as training and the remaining 30\% as testing, and 10\% of both the waveform and segment datas are used as training and the remaining 90\% as testing.

\begin{table}[t]
\centering
\resizebox{\linewidth}{!}{%
\scriptsize
\begin{tabular}{l P{0.6cm}P{0.5cm}P{0.5cm}P{0.5cm}P{0.7cm}P{1cm}P{1cm}
P{0.5cm}P{1cm}}
 \hline
 & square & iris & wine & glass & vehicle & waveform & segment & 
 dna & satimage \\ 
 \hline
Train size & 294 & 120 & 142 & 171 & 592 & 500 & 231 & 
2000 & 4435 \\ 
 Test size & 126 & 30 & 36 & 43 & 254 & 4500 & 2079 & 
 1186 & 2000 \\ 
 Covariate number& 2 & 4 & 13 & 9 & 18 & 21 & 19 & 
 180 & 36 \\ 
 Category number & 3 & 3 & 3 & 6 & 4 & 3 & 7 & 
 3 & 6 \\ 
 \hline
\end{tabular}
}
\caption{Multi-class classification data sets used in experiments for model comparison.}
\end{table}

We compare paSB-MLR, paSB-robit with $\kappa=6$, paSB-MSVM, and MSR with three other models, including $L_2$ regularized multinomial logistic regression ($L_2$-MLR), support vector machine (SVM), and adaptive multi-hyperplane machine (AMM). For paSB-robit, we run 8,000 iterations and discard the first 5,000 as burn-in (this setting is unchanged for experiments in Section \ref{sec:robust}). For paSB-MSVM, we use the spike-and-slab prior to select the kernel bases and set $0.5$ as the probability of spike at $0$, which is referred to as a uniform prior by \citet{polson2011data}. A Gaussian radial basis function (RBF) kernel is used and the kernel width is selected by 3-fold cross validation from $(2^{-10}, 2^{-9},\ldots, 2^{10})$. We run 1000 MCMC iterations and discard the first 500 as burn-in samples. For MSR, we try both paSB and parSB with $(K,T)$ set as $(1, 1)$, $(1, 3)$, $(5, 1)$, or $(5, 3)$. We run 10000 MCMC iterations and discard the first 5000 as burn-in samples. The predictive probability is calculated by averaging the Monte Carlo average predictive probabilities from paSB and parSB MSRs. An observation in the testing set is classified to the category associated with the largest predictive probability.

\begin{table}[t]
\centering
\makebox[\linewidth]{
\resizebox{\linewidth}{!}{%
\begin{tabular}{l P{1.5cm}P{1.5cm}P{1.8cm}P{1.5cm}P{1.6cm}P{1.6cm}P{1.6cm}P{1.6cm}P{1.6cm}P{1.6cm}P{1.6cm}}
 \hline
& paSB-MLR & paSB-robit& paSB-MSVM & $K=1$ $T=1$ & $K=1$ $T=3$ & $K=5$ $T=1$ & $K=5$ $T=3$ & DT-MSR & $L_2$-MLR & SVM & AMM \\ 
 \hline
square  & 53.17(2) & 57.94(2)& 0(256) & 57.14(4) & 13.49(12) & 1.59(10) & 0.79(33) &1.59(40) & 62.29(2) & 4.76(22) & 16.67(7) \\ 
  iris & 2(2) & 6.67(2)& 3.33(97.6) & 4.67(4) & 4.67(12) & 4(5.4) & 3.33(12) & 4.67(16.2)& 3.33(2) & 4(35) & 4.67(8.6) \\ 
  wine  & 8.33(2)&4.86(2) & 2.14(125.8) & 4.45(4) & 4.45(12) & 6.67(4) & 3.34(12)&3.89(16.4) & 3.89(2) & 2.78(77.2) & 3.89(7.8) \\ 
  glass & 39.07(5) &34.41(5)& 30.23(137.6) & 35.35(10) & 30.7(30) & 33.02(10.4) & 35.81(33.2) &32.56(47.8) &33.02(5) & 28.84(118) & 37.67(23.8) \\ 
  vehicle& 25.98(3)& 22.44(3) & 17.71(592) & 23.62(6) & 21.65(18) & 18.9(12) & 16.93(33) &18.11(45) & 22.83(3) & 18.50(256) & 21.89(17) \\ 
  waveform& 18.78(2) & 16.84(2)& 16.56(500) & 19.73(4) & 17.11(12) & 17.07(6) & 17.11(18) &16.49(23) & 15.60(2) & 15.22(212) & 18.54(11.6) \\ 
  segment & 7.07(6) &9.81(6)& 9.86(231) & 8.61(12) & 8.37(36) & 7.31(13) & 7.79(36)&8.85(50) & 8.56(6) & 6.20(93) & 12.47(11.4) \\ 
  dna & 6.58(2)&4.30(2)  & 7.25(1701) & 5.56(4) & 5.73(12) & 6.07(7) & 5.82(12) &4.72(17) & 5.98(2) & 4.97(1142) & 5.43(18.6) \\ 
  satimage  & 21.35(5)&16.40(5) & 9.5(4315) & 15.8(10) & 14.7(30) & 13.25(34) & 11.95(102) &11.55(105) & 17.80(5) & 8.50(1652) & 15.31(16.8) \\ 
  \hline
\end{tabular}
}
}\caption{Comparison of classification error rates (\%) of paSB-MLR, paSB-robit,  paSB-MSVM, MSRs with various $K$ and $T$ (results column 3 to 6), MSR with data transformation (DT-MSR), $L_2$-MLR, SVM, and AMM, using the collected MCMC sample with the highest log-likelihood. The number of active hyperplanes/support vectors used for out-of-sample predictions are shown in parenthesis.}\label{compare1}
\end{table}

 \begin{figure}[!t]
\centering
 \centering
 \includegraphics[width=.52\columnwidth]{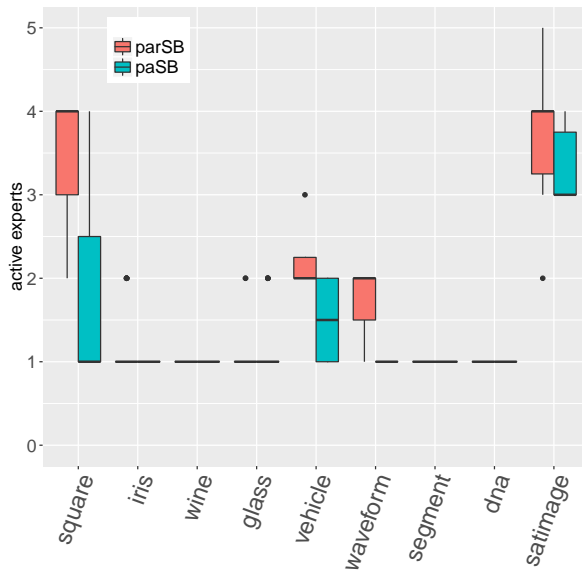}
 \caption{Boxplots of the number of active experts inferred by paSB/parSB MSRs with $K=5$ and $T=3$.} 
 \label{boxplot}
\end{figure}

For MSRs on benchmark data with data transformation, in the first step, we run paSB-MSR and parSB-MSR with $K=5$ and $T=3$ on the original covariates for 3,000 iterations 
to learn $\zv$, $\{r_{jk}\}$, $\{\betav_{jk}^{(t)}\}$. 
We then transform the covariates by Equation \eqref{data_trans}, where $\betav_{jk}^{(t)}$ are associated with active experts. In the last step, we use $\zv$ learned in the first step, and run paSB-MSR and parSB-MSR with $K=5$ and $T=1$ for 10,000 iterations and collect the last 5,000 samples to compute the predictive probabilities. 

We use the $L_2$-MLR provided in the LIBLINEAR package \citep{REF08a} to train a linear classifier, where a bias term is included and the regularization parameter 
 is five-fold cross-validated on the training set from $(2^{-10}, 2^{-9},\ldots, 2^{15})$. 
We run the LIBLINEAR package in R via \texttt{R} package \texttt{LiblineaR} \citep{LiblineaR}.
We also classify an observation to the category associated with the largest predictive probability. For SVM, we use the LIBSVM package \citep{LIBSVM} and run it in R  with \texttt{R} package \texttt{e1071} \citep{e1071}.  
A Gaussian RBF kernel is used and three-fold cross validation is adopted to tune both the regularization parameter 
and kernel width from $(2^{-10}, 2^{-9},\ldots, 2^{10})$ 
on the training set. For paSB-MSVM, we use three-fold cross validation on the training set to select a kernel width from $(2^{-10}, 2^{-9},\ldots, 2^{10})$. 
We choose the default LIBSVM settings for all the other parameters. 
We consider adaptive multi-hyperplane machine (AMM) of \citet{wang2011trading}, as implemented in the BudgetSVM\footnote{\url{http://www.dabi.temple.edu/budgetedsvm/}} (Version 1.1) software package \citep{BudgetSVM}. We use the batch version of the algorithm. Important parameters of the AMM include both the regularization parameter $\nu$ and training epochs $E$. As also mentioned by \citet{kantchelian2014large}, we do not observe the testing errors of AMM to strictly decrease as $E$ increased. Thus, in addition to cross validating the regularization parameter $\nu$ on the training set from $\{10^{-7}, 10^{-6},\ldots, 10^{-2}\}$, as done in \cite{wang2011trading}, for each $\nu$, we try $E\in\{5,10,20,50,100\}$ sequentially until the cross-validation error begins to decrease, $i.e.$, under the same $\nu$, we choose $E=20$ if the cross-validation error of $E=50$ is greater than that of $E=20$. We use the default settings for all the other parameters, and calculate average classification error rates.

We add an outlier to the vehicle data in Section \ref{sec:robust} as follows. 
There are 18 covariates, whose values range from $-1$ to $1$, in this data set. 
To simulate an outlier, since the MLR regression coefficients associated with the 4th, 5th, and 6th covariates all have large absolute values,  we first draw three uniform random numbers from $(-3,-2)\cup (2,3)$ and assign them to these three covariates, and then assign each of the 12 remaining covariates  a  uniform random number from $(-1,1)$. 
Finally, we draw the category label $y$ uniformly from $\{1,2,3,4\}$. 
\bibliographystyle{abbrvnat}


\end{document}